\documentclass[10pt]{article}
%\addtolength{\hoffset}{-1cm}
%\addtolength{\textwidth}{2cm}
%\addtolength{\voffset}{-1cm}
%\addtolength{\textheight}{1cm}
\usepackage{amsfonts}
\usepackage{amssymb}
\usepackage{amsthm}
\usepackage{amsmath} 
%GGGGGGGGGGGGGGGGGGGGGGGGGGGGGGGGGGGGGGGGGGGGGGGGGGGGGGGGGGGGGGGGGGGGGGGGGGGGGGGGGGGGGGGGGGGGGGGGGGGGGGg
\usepackage{graphicx}
\usepackage{hyperref}
%
%\usepackage{draft}
%%%%GGGGGGGGGGGGGGGGGGGGGGGGGGGGGGGGGGGGGGGGGGGGGGGGGGGGGGGGGGGGGGGGGGGGGGGGGGGGGGGGGGGGGGGGGGGGGGGGGGGGGGGGG
\usepackage{color}
\usepackage[mathscr]{euscript}

%\usepackage{showkeys}

%\numberwithin{equation}{section}
\tolerance = 10000 % this allows wider interword spaces without complaints
\setlength{\oddsidemargin}{-12mm} 
\setlength{\evensidemargin}{0mm}
\setlength{\textwidth}{180mm}
\setlength{\textheight}{231mm} 
\setlength{\topmargin}{-15mm}

\numberwithin{equation}{section}
%\tolerance = 10000 % this allows wider interword spaces without complaints

\newcommand{\ii}{{\rm i}}
\newcommand{\ee}{{\rm e}}
\newcommand{\x}{{\rm x}}

\newcommand{\dd}{{\rm d}}
\newcommand{\beq}{\begin{equation}}
\newcommand{\ene}{\end{equation}}
\newcommand{\ds}{\displaystyle}

\newtheorem{thm}{Theorem}
\newtheorem{lemma}[thm]{Lemma}
\newtheorem{prop}[thm]{Proposition}

\theoremstyle{definition}

%\newdateformat{daymonthyear}{\THEDAY \, \monthname[\THEMONTH] \THEYEAR}
%\newdateformat{monthyear}{\monthname[\THEMONTH] \THEYEAR}

%\usepackage{color}
 % Comments by Benito
 % Comments by Ricardo
%\newcommand{\aless}[1]{{\bf {\color{purple} AM: }} \textcolor{purple} {``#1"}} % Comments by Alessio

\begin{document}

\title{Quantum field theory with dynamical boundary conditions and the Casimir effect \thanks{Research partially supported by project  PAPIIT-DGAPA UNAM  IN103918 and by project SEP-CONACYT CB 2015, 254062. }}
\author{Benito A. Ju\'arez-Aubry\thanks{Fellow Sistema Nacional de Investigadores. Email: benito.juarez@iimas.unam.mx}\, and Ricardo Weder\thanks{Fellow Sistema Nacional de Investigadores. Email:weder@unam.mx. Home page: http://www.iimas.unam.mx/rweder/rweder.html.}\\
Departamento de F\'isica Matem\'atica, \\
Instituto de Investigaciones en Matem\'aticas Aplicadas y en Sistemas, \\ Universidad Nacional Aut\'onoma de M\'exico,\\Apartado Postal 20-126, Ciudad de M\'exico 01000, M\'exico.}
\date{}

\maketitle

%\centerline{\textbf{Private communication. Not for circulation.}}
\vspace{.5cm}
\centerline{Abstract}

\noindent We   study  a coupled system that describes the interacting dynamics between a {\it bulk} field, confined to a finite region with timelike boundary, and a {\it boundary} observable. In our system the dynamics of the boundary observable prescribes dynamical boundary conditions for the bulk field. We cast our classical system in the form of an abstract  linear Klein-Gordon equation,  in an enlarged Hilbert space for  the bulk field and the boundary observable. This makes it possible to apply to our coupled system the general methods of quantization. In particular, we implement the Fock quantization in full detail. Using this quantization  we study the Casimir effect in our coupled system. Specifically,  we compute  the renormalized local state polarization and the local Casimir energy, which we can define for both the bulk field and  the boundary observable of our system. Numerical examples in which the integrated Casimir energy is positive or negative are presented.

%\singlespacing

%=====================================================================================================
% INTRODUCTION
%=====================================================================================================
\section{Introduction}
\label{sec:Intro}

In this work we are concerned with the study of a system describing the coupled dynamics between a {\it bulk} field, confined to a region with (timelike) boundary, and a {\it boundary} observable, whereby the dynamics of the boundary observable prescribes the boundary conditions for the bulk field. While the  coupled system can be seen as an interacting one, as we shall see, it can be cast in the form of an abstract, linear Klein-Gordon equation, for which Fock quantization can be implemented in full detail. Once a quantum description of the system is available, a most natural question is to characterize the Casimir effect in this system. Indeed, the purpose of this paper is to study the renormalized local state polarization and the local Casimir energy , which we can define for both the bulk field and the boundary observable of the system.

This class of mixed bulk-boundary systems have received attention both in the mathematics and physics literature.
For time-periodic solutions they correspond to  Sturm-Liouville problems with boundary conditions that depend in the spectral parameter.  In the mathematics literature this type of Sturm-Liouville problems has been extensively studied. See for example \cite{walter}, \cite{fulton:1977}, \cite{shk96}, \cite{ben99},\cite{tretter}, \cite{mar}, \cite{menn}, and the references quoted there.

In the physics literature, such bulk-boundary systems have been studied recently in the context of quantum field theory in \cite{G.:2015yxa, Barbero:2017kvi, Dappiaggi:2018pju, Zahn:2015due} owing to different motivations. In \cite{G.:2015yxa, Barbero:2017kvi}, a central motivation was to underpin precisely the notion of boundary degrees of freedom, especially in connection with the study of black hole entropy calculations in the context of isolated or dynamical horizons \cite{Ashtekar:2004cn, Ashtekar:1997yu}, and also to understand whether boundary degrees of freedom could serve as particle detectors (in the spirit of \cite{Unruh:1976db}) for bulk quantum fields. On the other hand, the works \cite{ Dappiaggi:2018pju,Zahn:2015due} have been largely inspired by the holographic programme of AdS/CFT and its generalisations (see e.g. \cite{Maldacena:1997re, Witten:1998qj}), and the similarities between these bulk-boundary systems and so-called holographic renormalization\cite{Skenderis:2002wp}. 
It is further argued in \cite{G.:2015yxa, Barbero:2017kvi, Dappiaggi:2018pju} that the techniques applied to the study of bulk-boundary systems should also prove relevant for high-energy physics in the context of Yang-Mills in lower dimensions and Chern-Simons theory \cite{Witten:1988hf} and Maxwell-Chern-Simons theory \cite{Schonfeld:1980kb, Deser:1982vy,Deser:1981wh}. Such techniques are presumably also relevant for condensed matter theory. See e.g. \cite{Martin-Ruiz:2015skg} in the context of the modelling of topological insulators using effective field theory and \cite{Franca:2019twk} for the study of effects occurring therein.

In the physical literature discussed above, the focus has been mainly on the construction of states and linear observables in quantum theory for such bulk-boundary theories. This work differs in two senses. First, the dynamical boundary conditions that we impose are more general. In particular, the coefficient $\beta_2'$ that appears in \eqref{Dyn} is set to zero in the previous literature, whereas here we allow it to be different from zero. When $\beta_2'$ is nonzero there is a new physical effect, namely an interaction between the bulk and the boundary that depends on the second time derivative of the bulk field, that is to say, an interaction that is frequency dependent.  Second and most importantly, the main focus in this work is to study the Casimir effect, which is one of prime interest from a theoretical and experimental viewpoint. 

The Casimir effect posits that a system confined to some finite physical region must have a non-trivial vacuum energy density, which will depend on the boundary conditions of the system, even if the spacetime geometry is flat. For this reason, this effect has also been of great interest in the quantum field theory literature. The non-trivial vacuum energy is referred to as the Casimir energy. See e.g. \cite{Bordag}  and \cite{Dodonov:2020eto}  for a literature overview on the subject.
To the best of our knowledge, the first clear-cut calculation, which clarifies the origin of the Casimir effect in full quantum field theory, was performed by Kay in \cite{Kay:1978zr} in the case of a scalar field with periodic boundary conditions -- i.e., in a cylindrical universe. A thorough discussion of the effect in the same spirit is presented in \cite[Chap. 5]{Fulling:1989nb}. In those references, a point-splitting regularization and Minkowski-vacuum subtraction are used to define the local Casimir energy. Our strategy to compute the local Casimir energy is in this spirit too, only differing in the fact that we perform a Hadamard subtraction instead (see Sec. \ref{subsec:Hadamard} below), which is equivalent to removing the Minkowski vacuum contribution in our case of interest. Note that it is now understood that the Hadamard subtraction is better suited for generalizing renormalization prescriptions to curved spacetimes. See e.g. \cite{Wald:1995yp}. 
%Other non-dynamical boundary conditions have been explored in \cite{Asorey:2013wca}.

To the best of our knowledge, the only instance in which the Casimir effect has been studied in the context of dynamical boundary conditions, i.e. for bulk-boundary systems, appears in  \cite{Fosco:2013wpa}. That work is motivated by the fact that superconducting circuit experiments, which are relevant to the experimental measurement of the Casimir energy, are modelled most appropriately by using dynamical boundary conditions \cite{Wilson:2011}.

Our works differs from \cite{Fosco:2013wpa} in that we are interested in the local Casimir energy, both at zero temperature and at positive temperature, while they focus on the integrated Casimir energy at zero temperature, and in that the boundary condition that we impose (i.e. the equation for the boundary observable) is more general as we allow for $\beta'_2 \neq 0.$ The method used for obtaining the Casimir energy is also different. In \cite{Fosco:2013wpa} the regularization procedure is implemented by confining the system to a large box, and obtaining the difference between the total energy of the system of interest (with dynamical boundary conditions) and the would-be total energy for the system confined in the large box (with non-dynamical boundary conditions), with the aid of complex-analytic techniques. The integrated Casimir energy is then obtained by taking the infinite-size limit of the large-box reference system. The advantage of this method is that it allows them to study the static and dynamical Casimir effects in an efficient way, but this comes at the price of having no information about the local Casimir energy, which we obtain in this paper. 

 At temperature zero the main results for the renormalized  local state polarization are given at   \eqref{DirichletVacuumBulk},  \eqref{DirichletVacuumBound},   \eqref{RobinVacuumBulk}, and \eqref{RobinVacuumBoundary}, and for the local Casimir energy at  \eqref{HDiriBulk}, and  \eqref{DiriHBound}  \eqref{RobinHBulk}, and \eqref{RobinHBoundary}.   At positive temperature the main results for the renormalized  local state polarization are given at  \eqref{TBulkPola} and \eqref{TBoundPola}, and for the local Casimir energy at \eqref{HTbulk} and \eqref{HTboundary}.

 %The point-splitting method that we use is also  generalizes   to  states  at positive temperature  and to  coherent states. We will consider these states elsewhere. whi

%Finite temperature vacuum pola \eqref{TBulkPola} \eqref{TBoundPola} Casimir energy  \eqref{HTbulk} \eqref{HTboundary}

%Our notation is as follows: The $(1+1)$-dimensional Minkowski metric is defined as diag$(-1, 1, \ldots 1)$. Repeated indeces are summed over following Einstein's notation. Greek indeces run over the values $\{0, \ldots, n\}$ whereas latin indeces run over $\{1, \ldots, n\}$.
The paper is organized as follows. In Section ~\ref{sec:Class} we consider our classical system, and we formulate it as an abstract Klein-Gordon equation. In Section~\ref{sec:QFT} we quantize our classical system, we consider the Hadamard property and we introduce the renormalized local state polarization and the local Casimir energy.  
In Section~\ref{Casimir} we obtain our results on the renormalized local state polarization and on the local Casimir energy at zero temperature. In Section~\ref{CasimirTemp}   we obtain our results on the renormalized local state polarization and on the local Casimir energy at positive temperature. In Section~\ref{sec:Numerics} we present numerical examples in which the integrated Casimir energy can be positive or negative. Sec. \ref{Sec:Conclusions} contains our final remarks. In Appendix~\ref{app:UsefulFormulae} we  state some formulae that we use. In Appendix~\ref{App:Estimates} we obtain estimates on the eigenvalues of our classical problem. Finally, in Appendix~\ref{app:Rob} we give  details of the calculation of the renormalized local state polarization and of the local Casimir energy. 
\section{The classical problem}
\label{sec:Class}

\
We consider a scalar field obeying the following dynamical equation  $\phi: \mathbb R \times [0,\ell] \to \mathbb{R}$, where 
$\ell >0.$
\begin{align}
\label{Dyn}
\left\{
                \begin{array}{l}
                  \left[\partial^2_t- \partial^2_z + m^2 + V(z) \right] \phi(t, z) = 0, \, t \in \mathbb{R},  z \in (0,\ell),  \\
                  \cos\alpha \,\phi(t, 0) + \sin\alpha \,\partial_z \phi(t,0) = 0, \, \alpha \in [0, \pi), \\
                  \left[\beta^\prime_1 \partial^2_t -\beta_1 \right] \phi(t, \ell) = - \beta_2 \partial_z \phi(t,\ell) + \beta'_2 \partial_z \partial^2_t \phi(t,\ell).
                \end{array}\right.
\end{align}
Here, $ m^2 >0$ is a mass parameter, and $\beta_1, \beta_2, \beta_1^\prime, \beta_2^\prime$ are real parameters. The parameter $\beta^\prime_1$ can be seen as the square of an inverse velocity, $\beta_1$ as a mass or as a constant potential and,  $\beta_2,$  $\beta_2^\prime$ can be viewed as coupling parameters to external sources for the boundary dynamical observable $\phi_\partial(t):= \phi(t,\ell).$ Furthermore, the potential $V(z)$ is a real valued continuous function defined for  $ z \in [0,\ell].$ 
The system  \eqref{Dyn} subject to initial data on the surface defined by $t = 0$ can  be viewed as a dynamical   system that describes the interaction between a bulk field $\phi(t, z),  0 < z < \ell,$    and a boundary observable $\phi_\partial(t):=\phi(t,l) .$   Let us consider a solution to \eqref{Dyn} of the form,
\beq\label{w.1}
\phi(t, z)= e^{-i\omega t}\, \varphi(z).
\ene
Inserting \eqref{w.1} into \eqref{Dyn} we obtain,
 \begin{align}
\label{w.2}
\left\{
                \begin{array}{l}
                  \left[- \partial^2_z + m^2 + V(z) \right] \varphi( z) =  \omega^2\, \varphi(z),  z \in (0,\ell),  \\
                  \cos\alpha \,\varphi(0) + \sin\alpha \,\partial_z \varphi(0) = 0, \, \alpha \in [0, \pi), \\
                  -\left[ \beta_1  \varphi( \ell)  - \beta_2 \partial_z \varphi(\ell)\right]    =  \omega^2 \left[\beta_1^\prime \varphi(\ell)  -\beta'_2 \partial_z  \varphi(\ell)\right].
                \end{array}\right.
\end{align}
The  system \eqref{w.2} is a two-point boundary-value problem where the spectral parameter $\omega^2$ appears in the boundary condition at $ z=½\ell.$ Systems of this type appear, after separation of variables, in the one dimensional wave and heat equations, in a  variety of physical problems. In particular, cases wit $\beta^\prime_2\neq 0$ appear in the cooling of a thin solid bar placed at time zero in contact with a finite amount of liquid. See \cite{fulton:1977} for a discussion of these applications.

It is a general fact that  boundary-value problems where the spectral parameter  appears in the boundary condition can be understood as boundary-value problems that describe part of a physical system, and that when the space of states is enlarged to include the missing part of the system, then, the problem can be formulated in an operator theoretical way, with a boundary condition that does not depend on the spectral parameter. This is precisely the situation here, when one enlarges the space of states to include the boundary observable, as we proceed to do now.
 Following \cite{walter},  and\cite{fulton:1977}, we write the boundary value problem \eqref{w.2} in an operator theoretical way with  an associated selfadjoint operator    in a Hilbert space that describes the bulk field and the boundary observable. For this purpose we consider the following extended Hilbert space, $\mathcal H,$ that contains the  boundary observable. Namely,
\beq
\mathcal H:= L^2((0,\ell))\oplus \mathbb C,
\ene
where $\mathbb C $ denotes the complex numbers. $\mathcal H$ is   equipped with the following scalar product.
\begin{equation}\label{Inner}
(\varphi,\chi)_\mathcal{H} := \int_0^\ell \! \dd z \, \varphi_1(z) \overline{\chi_1(z)} +  \rho^{-1} \varphi_2 \overline{\chi_2},
\end{equation}
for $\varphi = (\varphi_1, \varphi_2)^\top$, $\chi = (\chi_1, \chi_2)^\top \in \mathcal{H}$  and where,  
 \begin{align}\label{w.3}
\rho := \det \begin{pmatrix}
           \beta_1' & \beta_1  \\
           \beta_2' & \beta_2 \\
         \end{pmatrix} =  \beta_1^\prime\, \beta_2 - \beta_1\, \beta_2'.  
\end{align}
We always assume that $ \rho >0.$

As in \cite{fulton:1977} let us define the  following   linear operator $A$ in $\mathcal{H}$ by
\begin{align}
\label{opA}
\varphi \in D(A)  \mapsto A\varphi := \begin{pmatrix}
           \left[ - \partial_z^2 + m^2 + V(z) \right] \varphi_1(z)   \\
           -\left[\beta_1 \varphi_1(\ell) - \beta_2 \partial_z  \varphi_1(\ell) \right] \\
         \end{pmatrix},
\end{align}
where the domain of $A$ is defined as follows,

%for $\varphi$ lying in the domain of $A$, the system \eqref{Dyn} can be cast in the form of an abstract wave equation as follows. Consider the map $t \mapsto \phi(t)$, then the problem \eqref{Dyn} is equivalent to
%\begin{align}
%- \partial_t^2 \phi(t) = A \phi(t)
%\label{AbsKG}
%\end{align}
%whenever $\phi(t)$ belongs to the domain of (essential) self-adjointness of $A$,
\begin{align}
\label{DomA}
D(A) & = \left\{ \varphi =  \begin{pmatrix} 
           \varphi_1   \\
           \varphi_2  \\
         \end{pmatrix} 
\in \mathcal{H}:  \varphi_1, \partial_z \varphi_1  \text{ are absolutely} \,
 \text{continuous in } [0, \ell],  \partial_z^2 \varphi_1 \in L_2([0,\ell]), \right. \nonumber \\
&\left. \cos\alpha \varphi_1(0) + \sin\alpha \partial_z \varphi_1(0) = 0, 	 \varphi_2 =  \beta_1^\prime\varphi_1(\ell) - \beta_2^\prime\partial_z \varphi_1(\ell) \right\}.
\end{align}
It is proven in \cite{fulton:1977} that if $V$ is continuous in $[0,l]$ and $\rho >0,$  the operator $A$ is densely defined and selfadjoint. Furthermore, \cite{fulton:1977} proves that the spectrum of $A$ is discrete and that it consists of multiplicity one eigenvalues that accumulate at $+\infty$.  The two-point boundary-value problem \eqref{w.2} is equivalent to the following eigenvalue problem for the selfadjoint  operator $A$ in the Hilbert space $\mathcal H$
\beq\label{w.4}
A \varphi= \omega^2\, \varphi, \qquad   \varphi \in D(A).
\ene
Then, the dynamical  system \eqref{Dyn} can be formulated as the following abstract Klein-Gordon equation,

\beq\label{w.5}
\partial^2_t \varphi(t,z) + A \varphi(t,z)=0, \qquad \varphi(t,z) \in D(A).
\ene
To apply to the abstract Klein-Gordon equation \eqref{w.5} the general methods of quantization it is crucial that the operator $A$ be positive. In the next proposition we give sufficient conditions for the positivity  of $A.$  

\begin{prop}\label{pos}
\end{prop}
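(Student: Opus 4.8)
The plan is to reduce the statement to an inequality for the quadratic form $\varphi\mapsto(A\varphi,\varphi)_{\mathcal H}$ on $D(A)$ and then compute that form explicitly from \eqref{Inner}, \eqref{opA} and \eqref{DomA}. Recall, from the result of \cite{fulton:1977} quoted above, that $A$ is selfadjoint with purely discrete spectrum accumulating only at $+\infty$; hence $A$ is bounded below, and $A>0$ (that is, $\inf\mathrm{spec}(A)>0$) is equivalent to the conjunction of $(A\varphi,\varphi)_{\mathcal H}\ge0$ for all $\varphi\in D(A)$ and $\ker A=\{0\}$; once these hold, the bound $(A\varphi,\varphi)_{\mathcal H}\ge c\,\|\varphi\|_{\mathcal H}^2$ with $c$ the lowest eigenvalue is automatic by the spectral theorem. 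So it suffices to establish nonnegativity of the form and triviality of the subspace on which it vanishes.

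First I would take $\varphi=(\varphi_1,\varphi_2)\tran\in D(A)$ and expand $(A\varphi,\varphi)_{\mathcal H}$. Integrating by parts the bulk contribution $\int_0^\ell\big(-\partial_z^2\varphi_1+(m^2+V)\varphi_1\big)\overline{\varphi_1}\,\dd z$ produces the Dirichlet integral $\int_0^\ell\big(|\partial_z\varphi_1|^2+(m^2+V(z))|\varphi_1|^2\big)\dd z$ together with the endpoint terms $-\big[\partial_z\varphi_1\,\overline{\varphi_1}\big]_0^\ell$. At $z=0$ the Robin condition in \eqref{DomA} turns $\partial_z\varphi_1(0)\overline{\varphi_1(0)}$ into $-\cot\alpha\,|\varphi_1(0)|^2$ (understood as $0$ in the Dirichlet case $\alpha=0$, where $\varphi_1(0)=0$). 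At $z=\ell$ I would substitute the other relation from \eqref{DomA}, $\varphi_2=\beta_1'\varphi_1(\ell)-\beta_2'\partial_z\varphi_1(\ell)$, into the contribution $\rho^{-1}\big(-[\beta_1\varphi_1(\ell)-\beta_2\partial_z\varphi_1(\ell)]\big)\overline{\varphi_2}$ coming from the second component of $A\varphi$; using the identity $\beta_1'\beta_2=\rho+\beta_1\beta_2'$ read off from \eqref{w.3}, the cross term $\partial_z\varphi_1(\ell)\overline{\varphi_1(\ell)}$ cancels and what survives is a manifestly real quadratic form in the boundary data $\big(\varphi_1(\ell),\partial_z\varphi_1(\ell)\big)$, equivalently in $\big(\varphi_1(\ell),\varphi_2\big)$. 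The outcome is an identity of the schematic shape
\beq
(A\varphi,\varphi)_{\mathcal H}=\int_0^\ell\!\big(|\partial_z\varphi_1|^2+(m^2+V)|\varphi_1|^2\big)\dd z-\cot\alpha\,|\varphi_1(0)|^2+Q_\partial\big(\varphi_1(\ell),\varphi_2\big),
\ene
with $Q_\partial$ an explicit $2\times2$ real quadratic form whose coefficients are rational in $\beta_1,\beta_2,\beta_1',\beta_2'$.

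Then I would read off sufficient conditions term by term. The bulk integral is at least $\big(\inf_{[0,\ell]}(m^2+V)\big)\|\varphi_1\|_{L^2}^2+\|\partial_z\varphi_1\|_{L^2}^2$, hence nonnegative once $m^2+V\ge0$ on $[0,\ell]$ (in particular whenever $V\ge0$, since $m^2>0$); the $z=0$ term is nonnegative exactly for $\alpha\in\{0\}\cup[\pi/2,\pi)$; and $Q_\partial\ge0$ reduces, because $\rho>0$, to elementary sign conditions on $\beta_1,\beta_2,\beta_1',\beta_2'$ (a sign condition on its trace and nonnegativity of its determinant). Under such hypotheses $(A\varphi,\varphi)_{\mathcal H}\ge0$. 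To obtain $A>0$, I would note that if $\inf(m^2+V)>0$ then $(A\varphi,\varphi)_{\mathcal H}=0$ forces $\varphi_1\equiv0$, whence $\varphi_2=\beta_1'\varphi_1(\ell)-\beta_2'\partial_z\varphi_1(\ell)=0$ and $\varphi=0$; together with discreteness of the spectrum this excludes $0$ from $\mathrm{spec}(A)$. (Alternatively, when $Q_\partial$ is positive definite one may combine it with the trace inequality $|\varphi_1(\ell)|^2\le\eta\|\partial_z\varphi_1\|_{L^2}^2+C_\eta\|\varphi_1\|_{L^2}^2$ and the fact that $\rho^{-1}|\varphi_2|^2$ is part of $\|\varphi\|_{\mathcal H}^2$ to reach $(A\varphi,\varphi)_{\mathcal H}\ge c\|\varphi\|_{\mathcal H}^2$ directly.)

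The step I expect to be the main obstacle is the $z=\ell$ boundary term in the regime $\beta_2'\ne0$ allowed here and excluded in the earlier literature: the form $Q_\partial$ then genuinely involves $|\partial_z\varphi_1(\ell)|^2$, which is \emph{not} controlled by the Dirichlet seminorm $\|\partial_z\varphi_1\|_{L^2}^2$ that carries the favourable sign, so a wrong sign there cannot be absorbed through a trace estimate. Its control must instead come from the algebra, namely the hypothesis $\rho>0$ and the fact that the combination $\varphi_2$ into which $\partial_z\varphi_1(\ell)$ enters is already weighted by $\rho^{-1}$ in $\|\cdot\|_{\mathcal H}$. Selecting the sharpest clean set of parameter conditions, and checking whether $\beta_2'\ne0$ forces one to also impose that $m^2$ (equivalently $\inf(m^2+V)$) be large enough, is where the care is needed; the remaining manipulations are routine.
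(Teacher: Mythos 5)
Your proposal is correct, and it reorganizes the argument in a genuinely different way from the paper. The paper argues by contradiction on an eigenvector $\varphi$ of a putative nonpositive eigenvalue $-\lambda$: the second component of $A\varphi=-\lambda\varphi$ supplies the $\lambda$-dependent linear constraint \eqref{w.8} between $\varphi_1(\ell)$ and $\partial_z\varphi_1(\ell)$, which converts the offending boundary term $-\partial_z\varphi_1(\ell)\overline{\varphi_1(\ell)}$ in \eqref{w.7} into $\tilde\gamma\,|\varphi_1(\ell)|^2$ with $\tilde\gamma\ge 0$ for all $\lambda\ge 0$ under hypotheses (1)--(2); positivity of the lowest eigenvalue then follows because the spectrum is discrete and accumulates only at $+\infty$. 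You instead prove nonnegativity of the quadratic form on all of $D(A)$, where no linear relation between $\varphi_1(\ell)$ and $\partial_z\varphi_1(\ell)$ is available, so you must verify positive semidefiniteness of a genuine $2\times 2$ boundary form. That verification does go through and is the only nonroutine step left implicit in your sketch: writing $a=\varphi_1(\ell)$, $b=\partial_z\varphi_1(\ell)$ and using $\beta_1'\beta_2=\rho+\beta_1\beta_2'$, the cross term cancels exactly as you predict and one finds $Q_\partial(a,b)=\rho^{-1}\left(-\beta_1\beta_1'\,|a|^2+2\beta_1\beta_2'\,\Re(a\bar b)-\beta_2\beta_2'\,|b|^2\right)$, which is positive semidefinite precisely when $\beta_1\beta_1'\le 0$, $\beta_2\beta_2'\le 0$ and $\beta_1\beta_2'\rho\ge 0$; all three hold under (1) and under (2), so the proposition's hypotheses suffice and your closing worry is unfounded --- no largeness condition on $m^2$ is needed, since $Q_\partial\ge 0$ on its own and nothing has to be absorbed into the Dirichlet integral. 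What your route buys is the stronger conclusion that $(A\varphi,\varphi)_{\mathcal H}\ge 0$ on the whole operator domain together with a transparent identification of the minimal sign conditions actually used (which are slightly weaker than (1)--(2)); what the paper's route buys is economy, since it only needs positivity of the scalar $\gamma(\lambda)$ along the spectrally constrained boundary data rather than of a full $2\times 2$ form. One small refinement: to get $\ker A=\{0\}$ you do not need $\inf_{[0,\ell]}(m^2+V)>0$; vanishing of the form forces $\|\partial_z\varphi_1\|_{L^2}=0$, hence $\varphi_1$ constant, and then either the condition at $z=0$ (when $\alpha=0$) or $\int_0^\ell(m^2+V)\,\dd z>0$ forces $\varphi_1\equiv 0$ and hence $\varphi_2=0$; the fully degenerate case $m^2+V\equiv 0$ with $\alpha\in[\pi/2,\pi)$ and $\beta_1=0$ is the same edge case that is implicitly present in the paper's own argument.
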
 
Suppose that the determinant $\rho$  defined in \eqref{w.3} is positive,  that $m^2+ V(x) \geq 0, x \in [0,\ell],$  that $ \beta^\prime_2 \neq 0,$  and that either $\alpha =0,$ or $\frac{\pi}{2}  \leq \alpha < \pi.$ Further, assume.
\begin{enumerate}
\item
If $\beta_2^\prime >0,$ then,  $ \beta_1 \geq 0,  \beta_2 < 0,$ and $\beta^\prime_1 < 0.$
\item
If $\beta^\prime_2 < 0,$ then, $ \beta_1 \leq 0,  \beta_2 > 0,$ and $\beta^\prime_1 > 0.$

\end{enumerate}
Under these conditions the eigenvalues of $A$ are all positive, and denoting by $\omega_1^2$ the smallest eigenvalue, $A \geq \omega_1^2 >0.$

\noindent {\it Proof:}
Suppose that $A$ has a nonpositive eigenvalue $ - \lambda,$ with $ \lambda \geq 0,$ and with eigenvector $\varphi= (\varphi_1,\varphi_2)^\top\in D(A),$
\beq\label{w.6}
A\,\varphi = - \lambda\, \varphi.
\ene
We denote
$$
\widehat{\cot \alpha}:= 0,\,  \text{\rm if }\,  \alpha =0,  \text{\rm and}   \, \widehat{\cot \alpha} = \cot \alpha, \, \text{\rm if} \, \frac{\pi}{2} \leq \alpha  < \pi.
$$
Using the first line in the right-hand side of  \eqref{opA}, \eqref{w.6},  and integrating by parts we obtain
\begin{align}\label{w.7}
-\lambda (\varphi_1,\varphi_1)_{L^2((0,\ell))}= ((A \varphi)_1, \varphi_1)_{L^2((0,\ell))} = (\partial_z\varphi_1, \partial_z \varphi_1)_{L^2((0,\ell))}+\\
\left( \left(m^2+V\right) \varphi_1,\varphi_1 \right)_{L^2((0,\ell))} -\widehat{\cot \alpha} |\varphi_1(0)|^2 -
\partial_z\varphi_1(\ell) \overline{\varphi_1(\ell)},
\end{align}
where we also use that if $\alpha =0, \varphi_1(0)=0,$  and that  $ \partial_z\varphi_1(0)= - \widehat{\cot \alpha} \varphi_1(0),$ if  $\frac{\pi}{2} \leq \alpha  < \pi.$ 
By  the  second  line in the right-hand side of  \eqref{opA}, \eqref{w.6}, since  by \eqref{DomA},  $\varphi_2= \beta_1^\prime \varphi_1(\ell)- \beta_2^\prime \partial_z \varphi_1(\ell),$
and reordering the terms  we get,
\beq\label{w.8}
(\beta_2-\lambda \beta^\prime_2) \partial_z\varphi_1(\ell) = (\beta_1- \lambda \beta_1^\prime)\varphi_1(\ell).
\ene 
If  $(\beta_2-\lambda \beta^\prime_2) = (\beta_1- \lambda \beta_1^\prime)=0,$ it follows from \eqref{w.3} that $ \rho=0.$  Since we assume that $\rho\neq 0,$ 
we have that either, $(\beta_2-\lambda \beta^\prime_2) \neq 0$ or $ (\beta_1- \lambda \beta_1^\prime)\neq 0.$ If   $(\beta_2-\lambda \beta^\prime_2) = 0,$ and $ (\beta_1- \lambda \beta_1^\prime)\neq 0$ it follows from \eqref{w.8} that $\varphi_1(\ell)=0.$ Further, if $(\beta_1-\lambda \beta^\prime_1) = 0,$ and $ (\beta_2- \lambda \beta_2^\prime)\neq 0$ it follows from \eqref{w.8} that $\partial_z\varphi_1(\ell)=0.$ Then, in these two cases we have,
\beq\label{bbb.1}
 -\partial_z\varphi_1(\ell) \overline{\varphi_1(\ell)}= 0.
 \ene
Moreover, if  $(\beta_2-\lambda \beta^\prime_2) \neq 0$ and $ (\beta_1- \lambda \beta_1^\prime)\neq 0,$ it follows from \eqref{w.8},
\beq\label{w.9}
\partial_z\varphi_1(\ell)= - \gamma \varphi_1(\ell),
\ene
where
\beq\label{w.10}
\gamma=  \frac{\beta_1- \lambda \beta_1^\prime}{\lambda \beta^\prime_2-\beta_2}.
\ene
Moreover, under assumptions  $(1)$ and $(2)$  above we see that $ \gamma >0.$
Introducing \eqref{bbb.1}  and  \eqref{w.9} into \eqref{w.7} we obtain,

\begin{align}\label{w.11}
-\lambda (\varphi_1,\varphi_1)_{L^2((0,\ell))}=  (\partial_z\varphi_1, \partial_z \varphi_1)_{L^2((0,\ell))}+\\
\left( \left(m^2+V\right) \varphi_1,\varphi_1 \right)_{L^2((0,\ell))} -\widehat{\cot \alpha} |\varphi_1(0)|^2 +\nonumber
\tilde{ \gamma} |\varphi_1(\ell)|^2,
\end{align}
where, either $\tilde{\gamma}=0$, or $\tilde{\gamma}= \gamma >0.$
As the left-hand side of \eqref{w.11} is  nonpositive and the right-hand side is nonnegative it follows that $\varphi_1=0.$ Furthermore, as by  \eqref{DomA}
$ \varphi_2 =  \beta_1^\prime\varphi_1(\ell) - \beta_2^\prime\partial_z \varphi_1(\ell),$ we get $\varphi_2=0,$ and in consequence $\varphi =0.$ This completes the proof that $A$ has no nonpositive eigenvalues. Note that as the eigenvalues of $A$ accumulate at infinity, there can not be a sequence of eigenvalues that converges to zero.

\qed

Assuming that  $ A >0,$  the solution to the abstract Klein-Gordon  equation \eqref{w.5} subject to initial conditions,
\begin{align}\label{w.12}
\left\{
                \begin{array}{l}
                  \phi(0,z) = f(z), \\
                  \partial_t \phi(0,z)  = p(z),
                \end{array}
\right.
\end{align}
is then given by
\begin{equation}\label{GenSolClass}
\phi(t) = \cos(t A^{1/2}) f + \frac{\sin (t A^{1/2})}{A^{1/2}} p.
\end{equation}
For a strong solution we require that $ f \in D(A)$ and $ p \in D(A^{1/2}).$ Note, however, that the right hand side of \eqref{GenSolClass} is well defined for $f,p \in \mathcal H,$ and  that in  this case it defines a weak solution to \eqref{w.5}.

More explicitly, let $\{\Psi_{n}\}_{n\in \mathbb{N}}$ be a  complete set of orthonormal eigenfunctions of  the selfadjoint operator  $A.$
Then, the solution given by \eqref{GenSolClass} takes the explicit form
\begin{align}\label{w.13}
\phi(t,z) = \sum_{n = 1}^\infty \, \Psi_{n}(z) \left[( f, \Psi_{n})_\mathcal{H} \cos(\omega_{n} t) + ( p, \Psi_{n})_\mathcal{H} \frac{\sin(\omega_{n} t)}{\omega_{n}} \right].
\end{align}
The series in \eqref{w.13} converges in the norm of $\mathcal H.$ Note that each of the eigenmodes that appear in \eqref{w.13}, namely,
$$
 \Psi_{n}(z)\, \cos(\omega_{n} t), \, \text{\rm and}\,   \Psi_{n}(z)\,  \frac{\sin(\omega_{n} t)}{\omega_{n}}, \qquad n=1,\dots,
$$
gives a solution to the dynamical system \eqref{Dyn}.

\section{Quantum field theory}
\label{sec:QFT}

As we have seen in Sec. \ref{sec:Class}, problem \eqref{Dyn} can be cast in the form of an abstract  linear Klein-Gordon equation.  In this section we always suppose that the   operator $A$ is positive, that is to say, that all its eigenvalues are positive. In particular, this is true if the assumptions of Proposition~\ref{pos}  hold. It follows that the standard  methods of quantization are available to yield the quantum counterpart of our classical problem. For details, the interested reader can look at refs. \cite{Arai:2018wks, Derezinski:2013dra, Fulling:1989nb, Wald:1995yp}. In \cite{BarberoG.:2019rfb}, special attention is paid to the r\^ole of boundary conditions in canonical quantization.
For our purposes it is convenient to use the canonical quantization method.
The one-particle Hilbert space of the quantum theory is the Hilbert space $\ell^2(\mathcal N)$, that consists of all complex-valued, square-summable sequences, 
$\{ \alpha_n \}_{n=1}^\infty$, with the scalar product,
$$
\left(  \{ \alpha_n \}_{n=1}^\infty, \{ \beta_n \}_{n=1}^\infty  \right)_{l^2(\mathcal N)}:= \sum_{n=1}^\infty\, \alpha_n\, \overline{\beta_n}.
$$
The   many-particle Hilbert space of the quantum theory is the   bosonic Fock space  \cite{Arai:2018wks} 
$$
\mathscr{H} = \mathbb{C} \oplus_{n = 1}^\infty   ( \otimes_s^n \ell^2(\mathcal N)),
$$ 
where  $\otimes_s^n \ell^2(\mathcal N)$denotes  the symmetric tensor product of $n$ copies of   $\ell^2(\mathcal N), n=1,\dots$ \cite{Arai:2018wks}.   

In describing the system at zero temperature, there is a distinguished ground state -- the vacuum state, 
$$\Omega_\ell = (1,0,0,\cdots) \in \mathscr{H}.
$$
The quantum fields are operators on $\mathscr{H}$ given by 
\begin{align}
\hat{\Phi}(t,z) = \sum_{n = 1}^\infty \frac{1}{(2 \omega_n)^{1/2}} \left(\ee^{-\ii \omega_n t} \Psi_n (z)\, {a}_n + \ee^{\ii \omega_n t} \overline{\Psi_n (z)}\, {a}_n^\dagger \right),
\end{align}
where $ a_n$ and $ a_n^\dagger$ are annihilation and creation operators on Fock space, satisfying the canonical commutation relations, 
$[a_n, a_{l}^\dagger] = \delta_{n,l }, [a_n, a_{l}] = 0,[a_n^\dagger, a_{l}^\dagger] =0, n,l =1,\dots,$   $\{\Psi_{n}\}_{n\in \mathbb{N}}$ is a  complete set of orthonormal eigenfunctions of  the selfadjoint operator  $A$ introduced in Sec. \ref{sec:Class}, and $\omega_n^2$ is the eigenvalue that corresponds to $\Psi_n, n=1,\dots.$ We write the eigenfunctions as follows \cite{fulton:1977},

\begin{align}
\Psi_n(z) = \begin{pmatrix}
           \psi_n(z)  \\
           \beta'_1 \psi_n(\ell) - \beta_2'   \partial_z\psi_n(\ell)
         \end{pmatrix}.
\label{Eigenfunctions}
\end{align}
The detailed form of the eigenfunctions $\Psi_n, n=1,\dots,$ depends on the boundary condition imposed at $z = 0$. See below in Sec. \ref{subsec:Dirichlet} and \ref{subsec:Robin}. The creation operator $a_n^\dagger$ creates a particle in the state $\Psi_n, n=1,\cdots,$ and the annihilation operator $a_n$ annihilates a particle in the state $\Psi_n, n=1,\dots.$

The  two-point  Wightman function in the vacuum $\Omega_\ell$ is given by
\begin{align}
\langle \Omega_\ell | \hat \Phi(t,z) \, \hat \Phi(t',z') \Omega_\ell \rangle = \sum_{n = 1}^\infty \frac{1}{2\omega_n} \ee^{-\ii \omega_n(t-t')}  \Psi_n (z) \otimes \overline{\Psi_n (z')}.
\label{G1+1}
\end{align}

The two-point Wightman function \eqref{G1+1} has four tensor-product components. The diagonal ones constitute bulk-bulk and boundary-boundary correlation functions, while the off-diagonal terms define the bulk-boundary correlation functions. In order to define the bulk and boundary renormalized local state polarizations and local Casimir energies that we study in this work in Sec. \ref{Casimir}, we will make use of the bulk-bulk and boundary-boundary two-point Wightman functions, which are given explicitly, in terms of the  eigenfunctions \eqref{Eigenfunctions}, by
\begin{subequations}
\begin{align}\label{BulkWightman}
\langle \Omega_\ell | \hat \Phi^{\rm B}(t,z) \hat \Phi^{\rm B}(t',z') \Omega_\ell \rangle & = \sum_{n = 1}^\infty \frac{1}{2\omega_n} \ee^{-\ii \omega_n(t-t')} \psi_n (z) \overline{\psi_n (z')} \text{ and }  \\\label{BoundaryWightman}
\langle \Omega_\ell | \hat \Phi^{\partial}(t) \hat \Phi^{\partial}(t') \Omega_\ell \rangle & = \sum_{n = 1}^\infty \frac{1}{2\omega_n} \ee^{-\ii \omega_n(t-t')} \left| \beta'_1 \psi_n(\ell) - \beta_2'    \partial_z\psi_n(\ell) \right|^2,
\end{align}
\end{subequations}
respectively.

At a positive  temperature, $T := 1/\beta > 0$, the thermal equilibrium state for our problem can be defined in Fock space by the usual Gibbs formula. See Section 2.3 of \cite{Fulling:1987}
for details. Here we only quote the  formula for the positive-temperature  two-point Wightman function given in eq. (2.43) of \cite{Fulling:1987}.
% a {\it bona fide} density operator defined on Fock space, since the spatial section of the problem is compact. It is defined in the usual way by
%\begin{align}
%\hat \rho & := \frac{\ee^{-\beta \dd \Gamma (  A^{1/2})}}{\sum_{n = 1}^\infty \ee^{-\beta \omega_n}}, \\
%%\dd \Gamma ( \hat A^{1/2}) & := \bigoplus_{n = 0}^\infty \sum_{j = 1}^n  \hat{1 \!\! 1}^{\circ (j-1)} \circ \hat A^{1/2} \circ \hat{1 \!\! 1}^{\circ (n-j)},
%\end{align}
%where $\dd \Gamma(A^{1/2})$ is the ``infinitesimal" boson second quantization operator. See e.g. \cite[Sec. 5.3]{Arai:2018wks}, and \cite[Sec. 3.3]{Derezinski:2013dra},. We emphasize that in more general cases, when the spatial section is not compact, such density operador does not exist ($\ee^{-\beta \dd \Gamma ( \hat A^{1/2})}$ will not be trace-class -- see \cite{Fulling:1987} for a detailed discussion) and one needs to work directly in terms of KMS states or ``doubled" Fock spaces using the Araki-Woods representation.
%
%The finite-temperature regularizationfunction is defined by
\begin{align}
\langle\hat \Phi(t,z)  \hat \Phi(t',z') \rangle_\beta 
&=\sum_{n = 1}^\infty \frac{1}{2\omega_n} \frac{\Psi_n (z)\otimes \overline{{\Psi_n (z')}}}{1-\ee^{-\beta \omega_n}} \left( \ee^{-\ii \omega_n(t-t')} + \ee^{-\beta \omega_n} \ee^{\ii \omega_n(t-t')} \right).
\label{FiniteTMassive}
\end{align}
The limit $\beta \to \infty$ of  \eqref{FiniteTMassive} yields \eqref{G1+1}, which justifies that the vacuum state can be seen as a zero-temperature state.

In order to define the bulk and boundary renormalized local state polarizations and local Casimir energies at positive temperature, that we will study in Sec. \ref{CasimirTemp}, we will make use of the bulk-bulk and boundary-boundary positive-temperature two-point Wightman functions, which are given explicitly, in terms of the  eigenfunctions \eqref{Eigenfunctions}, by
\begin{subequations}
\begin{align} \langle\hat \Phi^{\rm B}(t,z)  \hat \Phi^{\rm B}(t',z') \rangle_\beta  & = \sum_{n = 1}^\infty \frac{1}{2\omega_n} \frac{\psi_n (z) \overline{\psi_n (z')}}{1-\ee^{-\beta \omega_n}} \left( \ee^{-\ii \omega_n(t-t')} + \ee^{-\beta \omega_n} \ee^{\ii \omega_n(t-t')} \right), \text{ and } \label{BulkWightmanT} \\
\langle\hat \Phi^\partial(t,z)  \hat \Phi^\partial(t',z') \rangle_\beta  & = \sum_{n = 1}^\infty \frac{1}{2\omega_n} \frac{\left| \beta'_1 \psi_n(\ell) - \beta_2' \partial_z\psi_n(\ell) \right|^2}{1-\ee^{-\beta \omega_n}} \left( \ee^{-\ii \omega_n(t-t')} + \ee^{-\beta \omega_n} \ee^{\ii \omega_n(t-t')} \right), \label{BoundaryWightmanT}
\end{align}
\end{subequations}
respectively.

%\ben{Continuar aqu\'i. Decir algo sobre los estados coherentes (ver despu\'en del fin del documento para utilizar texto \'util).}

\subsection{The Hadamard property,  the renormalized local state polarization and the local  Casimir energy}
\label{subsec:Hadamard}

The linear system that we study in this work consists of a bulk field, defined on the interval (i.e. in $1+1$ dimensions), and a boundary observable defined on the right-end boundary of the interval (at $z = \ell$). In this section, we will define the renormalized local state polarization and local Casimir  energy for the bulk field and  the boundary observable of the system. For the bulk field, we briefly introduce the notion of Hadamard states.

The {\it physically-relevant} states of linear fields satisfy the so-called Hadamard condition, which characterizes the short-distance behavior of field correlations. In $1+1$ dimensions, which is the case that we study in this paper, we say that a state vector, $\Psi$, is locally Hadamard if its two-point Wightman function satisfies that for any two points, $\x$ and $\x'$ in a convex normal neighborhood of spacetime  $\mathcal{N}$ it holds that  \cite{Hawking:1973uf, Radzikowski:1996pa,Wald:1995yp}
\begin{align}
\langle \Psi | \hat{\Phi}(\x) \hat{\Phi}(\x') \Psi \rangle - H_\lambda(\x, \x') = C^\infty(\mathcal{N} \times \mathcal{N}),
\end{align}
where $H_\lambda$ is the $(1+1)$-dimensional Hadamard bi-distribution \cite{Decanini:2005gt}
\begin{align}
H_\lambda(\x, \x') = \frac{1}{4 \pi} \left[ Q(\x, \x') \ln\left(\sigma(\x, \x')/\lambda^2\right) + W(\x, \x') \right].
\end{align}

Here  we denote  $x:=(t,z), x':=(t',z'),$  $\lambda \in \mathbb{R}$ is an arbitrary scale, $\sigma(\x, \x')$ is the half-squared geodesic distance between the points $\x$ and $\x'$, which is unambiguously defined for $\x$ and $\x'$ inside a convex neighborhood, and $Q$ and $W$ are smooth, symmetric coefficients obtained by solving the so-called Hadamard recursion relations.  Furthermore, $\ln z$ is the principal branch of the logarithm with the argument of $z$ in $(-\pi, \pi).$  

We will be interested below (in Sec. \ref{Casimir}) in studying the problem when the potential is $V(z) = 0$. In this case, in local coordinates, we fix the Hadamard bi-distribution to
\begin{align}
 H_{\rm M}((t,z),(t',z')) & :=  -\frac{1}{4 \pi} \left\{ 2 \gamma + \ln \left[ \frac{m^2}{2} \sigma((t,z),(t',z')) \right] \right. \nonumber \\
&  + \frac{m^2}{2} \sigma((t,z),(t',z'))  \left[  \ln \left(\frac{m^2}{2} \sigma ((t,z),(t',z')) \right) + 2\gamma- 2   \right]  \nonumber \\
& \left. + \frac{m^4}{16 \pi} \sigma^2((t,z),(t',z')) \left[ \ln \left(\frac{m^2}{2} \sigma((t,z),(t',z')) \right)+2 \gamma -3 \right] \right\} \nonumber \\
&  + O\left(\sigma^3((t,z),(t',z') \ln \left( \sigma((t,z),(t',z') \right) \right), 
\label{HM}
\end{align}
where $2 \sigma((t,z),(t',z')) = -(t-t')^2+(z-z')^2$ and $\gamma$ is the Euler number.This choice guarantees that the renormalized state  polarization and renormalized stress-energy tensor in the Minkowski vacuum can be set to zero, in agreement with Wald's fourth axiom for the renormalized stress-energy tensor, see e.g. \cite{Wald:1995yp}. 
The subscript $\rm M$ on the left-hand side of  \eqref{HM} places emphasis on this fact. 
Using this prescription, our definitions \eqref{VacCas} below are equivalent to subtracting the Minkowski vacuum, as has been done originally by Kay in \cite{Kay:1978zr} for studying the local Casimir effect with periodic boundary conditions.
% with formal ans\"atze
%\begin{subequations}
%\begin{align}
%Q(\x, \x') & = \sum_{n = 0}^\infty v_n(\x, \x') \sigma^n(\x, \x'), \\
%W(\x, \x') & = \sum_{n = 0}^\infty w_n(\x, \x') \sigma^n(\x, \x')
%\end{align}
%\end{subequations}
%for the coefficients $v_n$ and $w_n$, by imposing that $H_\lambda$ be a solution to the Klein-Gordon equation in the $\x$ entry.

In plain words, as $\x' \to \x$, a Hadamard state vector, $\Psi$, will display a logarithmic singularity in the two-point Wightman function for linear theories defined in two spacetime dimensions. This singularity must be renormalized in order to define the renormalized local state polarization and local Casimir energy in the Hadamard state $\Psi$. More precisely, for a linear scalar field of mass $m$ the renormalized local state polarization and local Casimir energy are defined, respectively, by a point-splitting regularization as
\begin{subequations}
\begin{align}
\langle \Psi | ( \hat \Phi_{\rm ren})^2(\x) \Psi \rangle &:= \lim_{\x' \to \x} \left[\langle \Psi | \hat \Phi(\x) \hat \Phi(\x') \Psi \rangle - H_{\rm M}(\x, \x') \right], \label{VacumPol} \\
\langle \Psi |  \hat H_{\rm ren}(\x) \Psi \rangle &:= \lim_{\x' \to \x} \mathscr{T} \left[\langle \Psi | \hat \Phi(\x) \hat \Phi(\x') \Psi \rangle - H_{\rm M}(\x, \x') \right], \label{CasimirEnergy}
\end{align}
\label{VacCas}
\end{subequations}
where the operator $\mathscr{T}$ takes the form $\mathscr{T} = \frac{1}{2} \left(\partial_t \partial_{t'} + \partial_z \partial_{z'} + m^2 + V(z) \right) $ in coordinates, $x:=(t,z)$, defined on a convex normal neighborhood, $\mathcal{N}$.

Recall that a globally-hyperbolic spacetime  is a spacetime without boundary such that the Cauchy problem is globally well posed. It is known that in globally-hyperbolic spacetimes, vacuum and thermal states (more generally, passive states) satisfy the Hadamard condition \cite{Sahlmann:2000fh}.
Our spacetime it is not globally hyperbolic because it has a boundary. However, as we saw in Section~\ref{sec:Class}, when we impose our boundary condition the Cauchy problem is globally well posed. Furthermore,  as we shall see, the Hadamard condition also holds  in our problem, and  renormalized  bulk observables can be defined from the bulk  two-point Wightman function \eqref{BulkWightman} with prescription \eqref{VacCas}.
For boundary observables, as we shall see in Sec. \ref{Casimir}, no Hadamard subtraction is needed, for the coincidence limit $\x' \to \x$ of the boundary two-point Wightman function \eqref{BoundaryWightman} as the limit will turn out to be regular, as it is to be expected for a system of zero spacial dimension.

\section{The Casimir effect}
\label{Casimir}

In this section, we compute the renormalized local state polarization and the local Casimir energy. Throughout this section we set the potential $V(z) = 0$, 
and we always suppose that the   operator $A$ is positive, that is to say, that all its eigenvalues are positive. In particular, this is true if the assumptions of Proposition~\ref{pos}  hold. We will be interested in the case in which $\beta_2' \neq 0$.

% Both in the case of the vacuum polarization and for the Casimir energy, it is possible to define bulk and a boundary quantities. Thus, we will obtain the local vacuum polarization of the state for the bulk field at points $z \in(0,\ell)$ and also the polarization for the dynamical boundary field that defines the boundary condition defined at $z = \ell$, and similarly for the local renormalized Casimir energy.

The strategy that we will follow in this section for obtaining the renormalized local state polarization and local Casimir energy is to improve Fulton's estimates in our cases of interest in order to characterize the ultraviolet behavior of the two-point Wightman function. This will allow us in turn to extract and remove the singularity structure of the two-point Wightman function in the coincidence limit and appropriately define the renormalized local  state  polarization and the local   Casimir energy. The required estimates are obtained in Appendix \ref{App:Estimates}.

The case in which a Dirichlet boundary condition is imposed at $z = 0$ is treated in Sec. \ref{subsec:Dirichlet}, while the case in which a Robin (including Neumann) boundary condition is imposed at $z = 0$ appears in Sec. \ref{subsec:Robin}. From now on, we will use labels ${\rm D}$ and ${\rm R}$ to refer to Dirichlet and Robin quantities (states, eigenfunctions and eigenvalues), respectively.  In particular, on spite of the fact that the vacuum state $\Omega_\ell$ does not depend on the boundary condition, we will use the notation $\Omega_\ell^{ ({\rm D })},$   $\Omega_\ell^{ ({\rm R })}$ to emphasize that we consider, respectively,  the  Dirichlet or the Robin boundary condition at $z=0.$ 

\subsection{Dirichlet boundary condition at $z = 0$}
\label{subsec:Dirichlet}

The two-point Wightman function in the Dirichlet case is given by
\begin{align}
\langle \Omega_\ell^{({\rm D})} | \hat \Phi(t,x) \hat \Phi(t',x') \Omega_\ell^{({\rm D})} \rangle = \sum_{n = 1}^\infty \frac{1}{2\omega_n} \ee^{-\ii \omega_n(t-t')} \Psi_n^{({\rm D})} (z) \otimes \Psi_n^{({\rm D})} (z'),
\end{align}
with $(\omega_n^{\rm D})^2 = (s_n^{\rm D})^2 + m^2$, where the normalized eigenfunctions are given by
\begin{align}\label{t.t.2}
\Psi_n^{(\rm D)}(z) & = \begin{pmatrix}
				\psi_n^{({\rm D})}(z) \\ 
				\psi_n^{\partial \, ({\rm D})}
\end{pmatrix} 
 = \mathcal{N}_n^{\rm D} 
	\begin{pmatrix}
          -\sin(s_n^{\rm D} z)  \\
           -\beta_1' \sin(\ell s_n^{\rm D}) + \beta_2' s_n^{\rm D} \cos(\ell s_n^{\rm D})
    \end{pmatrix},  \\
\mathcal{N}_n^{\rm D} & = \left(\frac{L}{2} + \frac{2 \rho (s_n^{\rm D})^2 - (\beta_1' (s_n^{\rm D})^2 + (\beta_1+ \beta_1' m^2))(\beta_2' (s_n^{\rm D})^2 + (\beta_2+ \beta_2' m^2))}{2 \left[ (s_n^{\rm D})^2(\beta_2' (s_n^{\rm D})^2 + (\beta_2+ \beta_2' m^2))^2 + (\beta_1' (s_n^{\rm D})^2 + (\beta_1+ \beta_1' m^2))^2 \right]} \right)^{-1/2}.
\label{DiriEigen}
\end{align}

The eigenfunctions \eqref{DiriEigen} are orthonormal with respect to the inner product on the classical Hilbert space $\mathcal{H}$, i.e.,
\begin{equation}
(\Psi_n^{(\rm D)},\Psi_m^{(\rm D)})_\mathcal{H} = \int_0^\ell \! \dd z \, \psi_n^{({\rm D})}(z) \psi_m^{({\rm D})}(z) + \frac{\psi_n^{\partial \, ({\rm D})} \psi_m^{\partial \, ({\rm D})}}{\rho} = \delta_{nm},
\label{InnerD}
\end{equation}
where $\delta_{nm}$ is the Kronecker delta, with $\rho$ defined in \eqref{w.3}. Recall that the eigenfunctions $\Psi_n^{(\rm D)}$ are real valued. The eigenvalues $(\omega_n^{\rm D})^2 = (s_n^{\rm D})^2 + m^2$ satisfy the asymptotic behavior displayed in  \eqref{snD3} of App. \ref{App:Estimates}.
%\begin{align}
%s_n^{\rm D} & = \frac{(n-1/2) \pi}{\ell} + \delta_n^{\rm D}, \nonumber \\
%\delta_n^{\rm D} & = -\frac{\beta_1'}{\beta_2' \pi (n-1/2)} + \frac{ \ell^2 \left(-3 \beta_1 (\beta_2')^2 + 3 \beta_2 \beta_1' \beta_2' + (\beta_1')^3\right)- 3 \ell (\beta_1')^2 \beta_2'}{3 \pi ^3 (\beta_2')^3(n-1/2)^3} + O\left( n^{-5}\right).
%\label{snDiri}
%\end{align} 

In order to compute the renormalized local state polarization and the local  Casimir energy in the bulk, we will make use of the bulk-bulk component of the two-point  Wightman function, while for computing the boundary  renormalized local state polarization and boundary  local Casimir energy we will make use of the boundary-boundary component. They are given, respectively, by
\begin{subequations}  
\begin{align}
 \langle \Omega_\ell^{({\rm D})} | \hat \Phi^{\rm B}(t,z) \hat \Phi^{\rm B}(t',z') \Omega_\ell^{({\rm D})} \rangle & = \sum_{n = 1}^\infty \frac{1}{2\omega_n^{\rm D}} \ee^{-\ii \omega_n^{\rm D}(t-t')} \psi_n^{({\rm D})} (z) \psi_n^{({\rm D})} (z'), \\
 \langle \Omega_\ell^{({\rm D})} | \hat \Phi^\partial(t) \hat \Phi^\partial(t') \Omega_\ell^{({\rm D})} \rangle & = \sum_{n = 1}^\infty \frac{1}{2\omega_n^{\rm D}} \ee^{-\ii \omega_n^{\rm D}(t-t')} \psi_n^{\partial \, ({\rm D})} \psi_n^{\partial \, ({\rm D})}.\label{rr.2}
\end{align}
\end{subequations}

In Sec. \ref{subsubsec:DiriState} we obtain the bulk and boundary local renormalized state polarization, while in Sec. \ref{subsubsec:DiriCasimir} we obtain the bulk and boundary local Casimir energy, in the case with a Dirichlet boundary condition at $z = 0$.

\subsubsection{The renormalized local state polarization}
\label{subsubsec:DiriState}

We begin by computing the renormalized local state polarization. Following our discussion in Sec. \ref{subsec:Hadamard}, for the bulk field and the boundary observable, we are interested in computing, respectively,
\begin{subequations}
\label{DPola}
\begin{align}
\langle \Omega_\ell^{({\rm D})} | ( \hat \Phi^{{\rm B}}_{\rm ren})^2(t,z) \Omega_\ell^{({\rm D})} \rangle &:= \lim_{(t',z') \to (t,z)} \left[\langle \Omega_\ell^{({\rm D})} | \hat \Phi^{\rm B}(t,z) \hat \Phi^{\rm B}(t',z') \Omega_\ell^{({\rm D})} \rangle - H_{\rm M}((t,z),(t',z')) \right], \\
\langle \Omega_\ell^{({\rm D})} | ( \hat  \Phi^{\partial}_{\rm ren} )^2(t) \Omega_\ell^{({\rm D})} \rangle &:= \lim_{t' \to t} \langle \Omega_\ell^{({\rm D})} | \hat \Phi^{\partial}(t) \hat \Phi^{\partial}(t') \Omega_\ell^{({\rm D})} \rangle,\label{DPOLAB}
\end{align}
\end{subequations}
where  $H_{\rm M}$ is defined in \eqref{HM}.

We begin by computing the bulk renormalized local state polarization. We seek to control the ultraviolet behavior of
\begin{align}
& \langle \Omega_\ell^{({\rm D})} | \hat \Phi^{\rm B}(t,z) \hat \Phi^{\rm B}(t',z') \Omega_\ell^{({\rm D})} \rangle = \sum_{n = 1}^\infty \frac{(\mathcal{N}_n^{\rm D})^2}{2\omega_n^{\rm D}} \ee^{-\ii \omega_n^{\rm D}(t-t')}  \sin(s_n^{\rm D} z) \sin(s_n^{\rm D} z') \nonumber \\
& = \sum_{n = 1}^\infty \frac{(\mathcal{N}_n^{\rm D})^2}{8\omega_n^{\rm D}} \ee^{-\ii \omega_n^{\rm D}(t-t')}  \left[ \ee^{\ii s_n^{\rm D}(z - z')} + \ee^{-\ii s_n^{\rm D}(z - z')} - \ee^{\ii s_n^{\rm D}(z + z')}  - \ee^{-\ii s_n^{\rm D}(z + z')} \right].
\label{SingSuppD}
\end{align}

It is clear from the asymptotic estimates of $s_n^{\rm D}$ (see App. \ref{App:Estimates}) that in the limit $(t,z) \to (t',z')$ the sum on the right-hand side of  \eqref{SingSuppD} fails to converge for the first two terms inside the square bracket. We isolate the divergent behavior in this limit by noting that the large $n$ behavior of these terms is (using  \eqref{snD3})
\begin{align}
\frac{(\mathcal{N}_n^{\rm D})^2}{8\omega_n} \ee^{-\ii \omega_n^{\rm D}(t-t')} \ee^{\pm \ii s_n^{\rm D}(z \pm z')} & = \ee^{\frac{\ii \pi}{\ell} \left(n  - \frac{1}{2}\right) [-(t-t') \pm (z \pm z')]} \left( \frac{1}{4 \pi n} + O\left(n^{-2}\right) \right).
\label{On-2}
\end{align}

Thus, adding and subtracting terms of the form $\ee^{\frac{\ii \pi}{\ell} \left(n  - \frac{1}{2}\right) [-(t-t') \pm (z \pm z')]}/(4 \pi n)$ to the summand, we can write
\begin{align}
 \langle \Omega_\ell^{({\rm D})} |  \hat \Phi^{\rm B}(t,z) \hat \Phi^{\rm B}(t',z') \Omega_\ell^{({\rm D})} \rangle  & = \sum_{n = 1}^\infty \frac{1}{4 \pi n} \left( \ee^{\frac{\ii \pi}{\ell} \left(n  - \frac{1}{2}\right) [-(t-t') + (z - z')]} + \ee^{\frac{\ii \pi}{\ell} \left(n  - \frac{1}{2}\right) [-(t-t') - (z - z')]} \right. \nonumber \\
&  \left. - \ee^{\frac{\ii \pi}{\ell} \left(n  - \frac{1}{2}\right) [-(t-t') + (z + z')]} - \ee^{\frac{\ii \pi}{\ell} \left(n  - \frac{1}{2}\right) [-(t-t') - (z + z')]} \right) \nonumber \\
& + \sum_{n = 1}^\infty \left[ \frac{(\mathcal{N}_n^{\rm D})^2}{8\omega_n^{\rm D}} \ee^{-\ii \omega_n^{\rm D}(t-t')}  \left( \ee^{\ii s_n^{\rm D}(z - z')} + \ee^{-\ii s_n^{\rm D}(z - z')} - \ee^{\ii s_n^{\rm D}(z + z')}  - \ee^{-\ii s_n^{\rm D}(z + z')} \right) \right. \nonumber \\
& \left. - \frac{1}{4 \pi n} \left( \ee^{\frac{\ii \pi}{\ell} \left(n  - \frac{1}{2}\right) [-(t-t') + (z - z')]} + \ee^{\frac{\ii \pi}{\ell} \left(n  - \frac{1}{2}\right) [-(t-t') - (z - z')]} \right. \right. \nonumber \\
& \left. \left. - \ee^{\frac{\ii \pi}{\ell} \left(n  - \frac{1}{2}\right) [-(t-t') + (z + z')]} - \ee^{\frac{\ii \pi}{\ell} \left(n  - \frac{1}{2}\right) [-(t-t') - (z + z')]} \right) \right].
\end{align}

The first sum can be performed immediately using formula \eqref{sumJonq1} of App. \ref{app:UsefulFormulae}, and we obtain that
\begin{align}
& \langle \Omega_\ell^{({\rm D})} |   \hat \Phi^{\rm B}(t,z) \hat \Phi^{\rm B}(t',z') \Omega_\ell^{({\rm D})} \rangle  \nonumber \\
 & = - \frac{1}{4 \pi} \left[ \ee^{-\frac{\ii \pi}{2 \ell}  [-(t-t') + (z - z')] } \ln \left(1-\ee^{\frac{\ii \pi}{\ell}  [-(t-t') + (z - z')]}\right) \right. \nonumber \\
 & + \ee^{-\frac{\ii \pi}{2 \ell}  [-(t-t') - (z - z')] } \ln \left(1-\ee^{\frac{\ii \pi}{\ell}  [-(t-t') - (z - z')]}\right) - \ee^{-\frac{\ii \pi}{2 \ell}  [-(t-t') + (z + z')] } \ln \left(1-\ee^{\frac{\ii \pi}{\ell}  [-(t-t') + (z + z')]} \right) \nonumber \\
&  \left. - \ee^{-\frac{\ii \pi}{2 \ell}  [-(t-t') - (z + z')] } \ln \left(1-\ee^{\frac{\ii \pi}{\ell}  [-(t-t') - (z + z')]}\right) \right] \nonumber \\
& + \sum_{n = 1}^\infty \left[ \frac{(\mathcal{N}_n^{\rm D})^2}{8\omega_n^{\rm D}} \ee^{-\ii \omega_n^{\rm D}(t-t')}  \left[ \ee^{\ii s_n^{\rm D}(z - z')} + \ee^{-\ii s_n^{\rm D}(z - z')} - \ee^{\ii s_n^{\rm D}(z + z')}  - \ee^{-\ii s_n^{\rm D}(z + z')} \right] \right. \nonumber \\
& \left. - \frac{1}{4 \pi n} \left( \ee^{\frac{\ii \pi}{\ell} \left(n  - \frac{1}{2}\right) [-(t-t') + (z - z')]} + \ee^{\frac{\ii \pi}{\ell} \left(n  - \frac{1}{2}\right) [-(t-t') - (z - z')]} \right. \right. \nonumber \\
& \left. \left. - \ee^{\frac{\ii \pi}{\ell} \left(n  - \frac{1}{2}\right) [-(t-t') + (z + z')]} - \ee^{\frac{\ii \pi}{\ell} \left(n  - \frac{1}{2}\right) [-(t-t') - (z + z')]} \right) \right],
\label{VacDBulk1}
\end{align}
where the sum on the right-hand side is absolutely convergent as $(t',z') \to (t,z)$, since by \eqref{snD3} the summand is $O(n^{-2})$ uniformly in $z, z' \in (0, \ell)$ and $t, t'$ in a compact, say $[-T, T]$. The coincidence limit of the first two terms appearing in closed form in the right-hand side of   \eqref{VacDBulk1} is singular and of Hadamard form. Indeed
\begin{align}
& \lim_{(t',z') \to (t,z)}  \left\{- \frac{1}{4 \pi} \left[ \ee^{-\frac{\ii \pi}{2 \ell}  [-(t-t') + (z - z')] } \ln \left(1-\ee^{\frac{\ii \pi}{\ell}  [-(t-t') + (z - z')]}\right) \right. \right. \nonumber \\
 & \left. \left. + \ee^{-\frac{\ii \pi}{2 \ell}  [-(t-t') - (z - z')] } \ln \left(1-\ee^{\frac{\ii \pi}{\ell}  [-(t-t') - (z - z')]}\right) \right] - H_{\rm M}((t,z),(t',z')) \right\} = \frac{1}{4 \pi} \ln \left( \frac{m^2 \ell^2}{ 4 \pi^2} \right) + \frac{\gamma}{2 \pi}.
\end{align}

For the sums appearing in  \eqref{VacDBulk1}, we can use the dominated convergence theorem to take the limit inside the sums due to the uniform $O(n^{-2})$ behavior of the summand. Thus we finally obtain that,
\begin{align}
\langle \Omega_\ell^{({\rm D})} |  ( \hat \Phi^{{\rm B}}_{\rm ren})^2(t,z) \Omega_\ell^{({\rm D})} \rangle & := \lim_{(t',x') \to (t, x)}\left[\langle \Omega_\ell^{({\rm D})} | \hat \Phi^{\rm B}(t,z) \hat \Phi^{\rm B}(t',z') \Omega_\ell^{({\rm D})} \rangle - H_{\rm M} ( (t,z), (t',z')) \right] \nonumber \\
& =  \frac{1}{4 \pi} \ln \left( \frac{m^2 \ell^2}{4 \pi^2} \right) +   \frac{\gamma}{2 \pi} + \frac{1}{2 \pi}\Re \left( \ee^{\frac{\ii \pi}{\ell} z} \ln \left(1 - \ee^{\frac{\ii 2 \pi}{\ell} z} \right) \right) \nonumber \\
& + \sum_{n = 1}^\infty \left[ \frac{(\mathcal{N}_n^{\rm D})^2}{2\omega_n} \sin^2\left({s_n^{\rm D} z}\right) - \frac{1}{\pi n} \sin^2\left(\frac{\pi}{\ell} (n-1/2) z \right)  \right].
\label{DirichletVacuumBulk}
\end{align}

The sum appearing in  \eqref{DirichletVacuumBulk} is absolutely convergent, has $O(n^{-2})$ summand uniformly in $z$ and is uniformly bounded in $z$. We observe a logarithmic divergence as $z \to 0$ and as $z \to \ell$. These logarithmic divergences are integrable and occur also e.g. for the computation of the renormalized local state polarization and the local Casimir energy in the case of Dirichlet boundary conditions at the ends of  the interval. See e.g. \cite[Chap. 5]{Fulling:1989nb}. We stress that the bulk renormalized local state polarization is time-independent.

We proceed to compute the boundary renormalized local state polarization. We have that
\begin{align}
 \langle \Omega_\ell^{({\rm D})} | \hat \Phi^{\partial}(t) \hat \Phi^{\partial}(t') \Omega_\ell^{({\rm D})} \rangle & =   \sum_{n = 1}^\infty \frac{(\mathcal{N}^{\rm D}_n)^2}{2\omega^{\rm D}_n} \ee^{-\ii \omega^{\rm D}_n(t-t')} \left[-\beta_1' \sin(\ell s_n^{\rm D}) + \beta_2' s_n^{\rm D} \cos(\ell s_n^{\rm D})\right]^2.
\label{DiriPolaBound}
\end{align}

We note that by \eqref{snD3}
\begin{align}
& \left| \frac{(\mathcal{N}^{\rm D}_n)^2}{2\omega^{\rm D}_n} \ee^{-\ii \omega^{\rm D}_n(t-t')} \left[-\beta_1' \sin(\ell s_n^{\rm D}) + \beta_2' s_n^{\rm D} \cos(\ell s_n^{\rm D})\right]^2 \right| \nonumber \\
& = \frac{(\mathcal{N}^{\rm D}_n)^2}{2\omega^{\rm D}_n}  \left[-\beta_1' \sin(\ell s_n^{\rm D}) + \beta_2' s_n^{\rm D} \cos(\ell s_n^{\rm D})\right]^2 \nonumber \\
&  = \frac{\ell^4 (\beta_1 \beta_2'-\beta_1' \beta_2)^2}{\pi ^5 (\beta_2')^2 n^5} + O\left( n^{-6} \right).
\label{Dpolabound}
\end{align}
From  \eqref{Dpolabound} it follows that the limit $t' \to t$ in the definition of the boundary renormalized local state  polarization \eqref{DPOLAB} can be taken inside the sum by a dominated convergence argument, since the bounding function appearing in the right-hand side  \eqref{Dpolabound} is $t, t'$-independent and summable due to the polynomial fall-off as $O(n^{-5})$ when $n \to \infty$. Hence, we have that
\begin{align}
& \langle \Omega_\ell^{({\rm D})} |  ( \hat  \Phi^{\partial}_{\rm ren} )^2 \Omega_\ell^{({\rm D})} \rangle   = \sum_{n = 1}^\infty \frac{(\mathcal{N}^{\rm D}_n)^2}{2\omega^{\rm D}_n} \left[-\beta_1' \sin(\ell s_n^{\rm D}) + \beta_2' s_n^{\rm D} \cos(\ell s_n^{\rm D})\right]^2 .
\label{DirichletVacuumBound}
\end{align}

The sum on the right-hand side of  \eqref{DirichletVacuumBound} is absolutely convergent, the summand is $O(n^{-5})$, and we stress that the renormalized local state polarization in the boundary is time independent.

\subsubsection{The local Casimir energy}
\label{subsubsec:DiriCasimir}

%\ben{Note that:
%\begin{align}
%& \frac{1}{2}\left( \partial_t \partial_{t'} + \partial_z \partial_{z'} \right) H_{\rm M}(t,t',z,z') = 
%\frac{m^2 \left((t-t')^2+(z-z')^2\right)}{4 \pi  \left((t-t')^2-(z-z')^2\right)} \left(\frac{2}{m^2 \left(-(t-t')^2+(z-z')\right)}  \right. \nonumber \\
%& \left. -\frac{1}{2} -\frac{m^2}{32} \left(-(t-t')^2+(z-z')^2\right) \left(2  \ln \left(\frac{m^2}{4}\left(-(t-t')^2+(z-z')^2\right)\right)+4 \gamma  -3 \right)\right)
%\end{align}}

Following our discussion in Sec. \ref{subsec:Hadamard}, for the bulk field and the boundary observable, we are interested in computing
\begin{subequations}
\label{DH}
\begin{align}
\langle \Omega_\ell^{({\rm D})} | \hat H^{{\rm B}}_{\rm ren}(t,z) \Omega_\ell^{({\rm D})} \rangle &= \lim_{(t',z') \to (t,z)} \frac{1}{2}\left[ \left(\partial_t \partial_{t'} + \partial_z \partial_{z'} + m^2 \right)\langle \Omega_\ell^{({\rm D})} | \hat \Phi^{\rm B}(t,z) \hat \Phi^{\rm B}(t',z') \Omega_\ell^{({\rm D})} \rangle \right. \nonumber \\
& \left. - \left(\partial_t \partial_{t'} + \partial_z \partial_{z'}  + m^2 \right) H_{\rm M}((t,z),(t',z')) \right], \label{DHBulk} \\
\langle \Omega_\ell^{({\rm D})} | \hat H^{\partial}_{\rm ren}(t) \Omega_\ell^{({\rm D})} \rangle &= \lim_{t' \to t} \frac{1}{2}\left(\beta_1' \partial_t \partial_{t'}  - \beta_1 \right) \langle \Omega_\ell^{({\rm D})} | \hat \Phi^{\partial}(t) \hat \Phi^{\partial}(t') \Omega_\ell^{({\rm D})} \rangle, \label{DHBound}
\end{align}
\end{subequations}
where the Hadamard bi-distribution, $H_{\rm M}$ is chosen as in \eqref{HM}.

We begin by computing the bulk local Casimir energy. We are interested in controlling the ultraviolet behavior of the sum defining the first term on the right-hand side of \eqref{DHBulk}
\begin{align}
& \frac{1}{2} \left(\partial_t \partial_{t'}  + \partial_z \partial_{z'} + m^2 \right)  \langle \Omega_\ell^{({\rm D})} | \hat \Phi^{\rm B}(t,z) \hat \Phi^{\rm B}(t',z') \Omega_\ell^{({\rm D})} \rangle \nonumber \\
& = \sum_{n = 1}^\infty \frac{(\mathcal{N}_n^{\rm D})^2 }{4 \omega^{\rm D}_n} \ee^{-\ii \omega^{\rm D}_n(t-t')} \left[ \left( (\omega^{\rm D}_n)^2 + m^2 \right) \sin(s_n^{\rm D} z) \sin(s_n^{\rm D} z') + (s_n^{\rm D})^2 \cos(s_n^{\rm D} z) \cos(s_n^{\rm D} z')\right],
\end{align}
which using the dispersion relation, $(\omega^{\rm D}_n)^2 = (s_n^{\rm D})^2+m^2$, can be written as
\begin{subequations}
\label{SingSuppDH}
\begin{align}
& \frac{1}{2} \left(\partial_t \partial_{t'}  + \partial_z \partial_{z'} + m^2 \right)  \langle \Omega_\ell^{({\rm D})} | \hat \Phi^{\rm B}(t,z) \hat \Phi^{\rm B}(t',z') \Omega_\ell^{({\rm D})} \rangle = \sum_{n = 1}^\infty \left[ D_{{\rm B} n}^{{\rm H} (1)}(t,t',z,z')+ D_{{\rm B} n}^{{\rm H} (2)}(t,t',z,z') \right], \\
& D_{{\rm B} n}^{{\rm H} (1)}(t,t',z,z') :=  \frac{(\mathcal{N}_n^{\rm D})^2}{8} \omega^{\rm D}_n \ee^{-\ii \omega^{\rm D}_n(t-t')}  \left[ \ee^{\ii s_n^{\rm D}(z - z')} + \ee^{-\ii s_n^{\rm D}(z - z')} \right], \label{DBnH1} \\
& D_{{\rm B} n}^{{\rm H} (2)}(t,t',z,z') := - \frac{(\mathcal{N}_n^{\rm D})^2 m^2}{8 \omega^{\rm D}_n}  \ee^{-\ii \omega^{\rm D}_n(t-t')}  \left[  \ee^{\ii s_n^{\rm D}(z + z')}  + \ee^{-\ii s_n^{\rm D}(z + z')} \right]. \label{DBnH2}
\end{align}
\end{subequations}

We begin by studying the singular behavior as ${(t',z') \to (t,z)}$ of the term $\sum_{n = 1}^\infty D_{{\rm B} n}^{{\rm H} (1)}(t,t',z,z')$. It follows from the estimates \eqref{snD3} that the right-hand side of  \eqref{DBnH1} fails to converge in the coincidence limit. Thus, as we have done for the renormalized local state polarization, we seek to isolate the singularity structure in this limit. Let
\begin{align}
h^{\pm}(t-t',z-z')  & := \ee^{\ii \frac{\pi}{\ell}\left(n-\frac{1}{2}\right) [-(t-t')\pm(z-z')]}\left\{1 - \ii \left(\omega^{\rm D}_n - \frac{\pi}{\ell}\left(n-\frac{1}{2}\right) \right)(t-t') \pm \ii \left(s_n^{\rm D} - \frac{\pi}{\ell}\left(n-\frac{1}{2}\right) \right) (z-z') \right. \nonumber \\
& \left. \left.  + \frac{1}{2} \left[- \ii \left(\omega^{\rm D}_n - \frac{\pi}{\ell}\left(n-\frac{1}{2}\right) \right)(t-t') \right. \pm \ii \left(s_n^{\rm D} - \frac{\pi}{\ell}\left(n-\frac{1}{2}\right) \right) (z-z') \right]^2 \right\}.
\end{align}

By estimate \ref{LemmaBound}
\begin{align}
& \left| \omega^{\rm D}_n \left(\ee^{-\ii \omega^{\rm D}_n(t-t')} \ee^{\pm \ii s_n^{\rm D} (z-z')}  - h^\pm(t-t',z-z') \right) \right| \nonumber \\
& \hspace{30 pt}  \leq \frac{\omega^{\rm D}_n}{6}\left|- \ii \left(\omega^{\rm D}_n - \frac{\pi}{\ell}\left(n-\frac{1}{2}\right) \right)(t-t') \pm \ii \left(s_n^{\rm D} - \frac{\pi}{\ell}\left(n-\frac{1}{2}\right) \right) (z-z') \right|^3.
\label{DiriH1}
\end{align}

It follows from the asymptotic expansion \eqref{snD3} that the bound on the right-hand side of \eqref{DiriH1} can be further bounded uniformly in $t, t'$ and $z, z'$ for $t, t' \in [-T,T]$ and $z, z' \in (0, \ell)$ by a summable function, which behaves as $O(n^{-2})$ as $n \to \infty$ in an analogous way to the bulk renormalized local state polarization case above, whereby a dominated convergence argument  allows us to write
\begin{align}
\lim_{(t',z') \to (t,z)} \sum_{n = 1}^\infty D_{{\rm B} n}^{{\rm H} (1)}(t,t',z,z') = \lim_{(t',z') \to (t,z)} \sum_{n = 1}^\infty \frac{(\mathcal{N}_n^{\rm D})^2}{8} \omega^{\rm D}_n \left(h^+(t-t',z-z') + h^-(t-t',z-z') \right).
\end{align}

Performing an asymptotic expansion for the summand using \eqref{snD3}, we further write
\begin{subequations}
\begin{align}
& \lim_{(t',z') \to (t,z)}   \sum_{n = 1}^\infty D_{{\rm B} n}^{{\rm H} (1)}(t,t',z,z')  = \lim_{(t',z') \to (t,z)} \sum_{n = 1}^\infty \left[ P_n^+((t-t'),(z-z')) + P_n^-((t-t'),(z-z')) \right]  \nonumber \\
& + \lim_{(t',z') \to (t,z)}  \sum_{n = 1}^\infty \left[ \frac{(\mathcal{N}_n^{\rm D})^2}{8} \omega^{\rm D}_n  \left(h^+(t-t',z-z') + h^-(t-t',z-z') \right)  - P_n^+((t-t'),(z-z')) - P_n^-((t-t'),(z-z')) \right]
\label{DBnH1-2}
\end{align}
with
\begin{align}
& P_n^\pm((t-t'),(z-z')) \nonumber \\
& := \ee^{\frac{\ii \pi}{\ell} \left(n  - \frac{1}{2}\right) [-(t-t') \pm (z - z')]} \left[ \frac{\pi n }{4 \ell^2}  \right. - \frac{2 \ii \beta_1' [-(t-t')\pm(z-z')]+\beta_2' \left(\pi +\ii \ell m^2 (t-t') \right)}{8 \beta_2' \ell^2}  \nonumber \\
& + \left. \frac{(\beta_2')^2 \ell^2 m^2 \left(4-m^2 (t-t')^2\right)-4 \beta_1' \beta_2' \ell m^2 (t-t') [-(t-t')\pm(z-z')]-4 (\beta_1')^2 [-(t-t')\pm(z-z')]^2}{32 \pi  (\beta_2')^2 \ell^2 n}   \right].
\end{align}
\end{subequations}
The second sum on the right-hand side of  \eqref{DBnH1-2} converges for all values of $t,t'$ and $z,z'$ -- one can verify  from the asymptotic estimate \eqref{snD3} that the summand behaves as $O(n^{-2})$ as $n \to \infty$,  for all values of $t, t'$ in a compact set and $z'z' \in (0,\ell)$-- while the first sum can be performed analytically using formulae \eqref{sumJonq1}, \eqref{sumJonq0} and \eqref{sumJonq-1}, and the sum contains a distributional singularity as $(t',z') \to (t,z)$ that compensates the Hadamard singular structure. We have that
\begin{align}
\lim_{(t',z') \to (t,z)} & \left[\sum_{n = 1}^\infty D_{{\rm B} n}^{{\rm H} (1)}(t,t',z,z') - \frac{1}{2} \left( \partial_t \partial_{t'} + \partial_z \partial_{z'} + m^2 \right) H_{\rm M} (t,t',z,z') \right] \nonumber \\
& = \frac{\pi ^2 \beta_2'+6 (2 \gamma -1) \beta_2' \ell^2 m^2+6 \beta_2' \ell^2 m^2 \ln \left(\frac{\ell^2 m^2}{4 \pi ^2}\right)+24 \beta_1' \ell}{48 \pi  \beta_2' \ell^2} + \sum_{n = 1}^\infty  \left( \frac{(\mathcal{N}_n^{\rm D})^2 \omega^{\rm D}_n}{4} - \frac{\pi n }{2 \ell^2} + \frac{\pi }{4  \ell^2}  - \frac{ m^2 }{4 \pi  n} \right) ,
\label{HDiriBulk1}
\end{align}
where the summand on the second term of the right-hand side of  \eqref{HDiriBulk1} is $O(n^{-2})$.

The contribution to the local Casimir energy coming from the term \eqref{DBnH2} can be handled as in the case of the renormalized local state polarization. See the discussion begining at  \eqref{On-2}. We have that
\begin{align}
 \lim_{(t',z')\to(t,z)} \sum_{n = 1}^\infty D_{{\rm B} n}^{{\rm H} (2)}(t,t',z,z') &  = - \lim_{(t',z')\to(t,z)} \sum_{n = 1}^\infty \frac{(\mathcal{N}_n^{\rm D})^2 m^2}{8 \omega^{\rm D}_n}  \ee^{-\ii \omega^{\rm D}_n(t-t')}  \left[  \ee^{\ii s_n^{\rm D}(z + z')}  + \ee^{-\ii s_n^{\rm D}(z + z')} \right] \nonumber \\
& = \frac{m^2}{2 \pi} \Re\left( \ee^{-\ii \frac{\pi}{\ell} z} \ln \left( 1- \ee^{\ii \frac{2 \pi}{\ell} z} \right) \right) + m^2 \sum_{n = 1}^\infty \left[ \frac{1}{4 \pi n } - \frac{(\mathcal{N}_n^{\rm D})^2}{8 \omega^{\rm D}_n} \right. \nonumber \\
& + \left. \frac{(\mathcal{N}_n^{\rm D})^2}{4 \omega^{\rm D}_n} \sin^2\left( s_n^{\rm D} z \right) - \frac{1}{2 \pi n} \sin^2\left( \frac{\pi}{\ell}(n-1/2) z \right) \right],
\label{HDiriBulk2}
\end{align}
where the summand of the second term on right-hand side of  \eqref{HDiriBulk2} is $O(n^{-2})$ for all values of $z\in (0, \ell)$ and the sum converges absolutely and uniformly in $z$.

Adding up  \eqref{HDiriBulk1} and \eqref{HDiriBulk2}, we find that the  bulk local Casimir energy is
\begin{align}
 \langle \Omega_\ell^{({\rm D})} |  \hat H^{{\rm B}}_{\rm ren}(t,z) \Omega_\ell^{({\rm D})} \rangle & = \frac{\pi ^2 \beta_2'+6 (2 \gamma -1) \beta_2' \ell^2 m^2+6 \beta_2' \ell^2 m^2 \ln \left(\frac{\ell^2 m^2}{4 \pi ^2}\right)+24 \beta_1' \ell}{48 \pi  \beta_2' \ell^2} \nonumber \\
 & + \frac{m^2}{2 \pi} \Re\left( \ee^{-\ii \frac{\pi}{\ell} z} \ln \left( 1- \ee^{\ii \frac{2 \pi}{\ell} z} \right) \right) + \sum_{n = 1}^\infty \left[  \left( \frac{(\mathcal{N}_n^{\rm D})^2 \omega^{\rm D}_n}{4} - \frac{\pi n }{2 \ell^2} + \frac{\pi }{4  \ell^2}  - \frac{ m^2 }{4 \pi  n} \right) \right. \nonumber \\
&  \left.  - m^2 \left( \frac{(\mathcal{N}_n^{\rm D})^2 }{8 \omega^{\rm D}_n} - \frac{1}{4 \pi n} \right) + \frac{(\mathcal{N}_n^{\rm D})^2 m^2}{4 \omega^{\rm D}_n} \sin^2\left( s_n^{\rm D} z \right) - \frac{m^2}{2 \pi n} \sin^2\left( \frac{\pi}{\ell}(n-1/2) z \right) \right].
\label{HDiriBulk}
\end{align}

The sum appearing on the right-hand side of  \eqref{HDiriBulk} is absolutely convergent and uniformly convergent in $z \in (0, \ell)$, the summand is $O(n^{-2})$ for all $z \in (0, \ell)$. We stress that the bulk local Casimir energy is time-independent, hence conserved. We observe a logarithmic divergence as $z \to 0$ and as $z \to \ell$, as is the case for the bulk renormalized local state polarization, see  \eqref{DirichletVacuumBulk}. Since the logarithmic divergence of  \eqref{HDiriBulk} is integrable, the total Casimir energy is finite.

We now focus our attention on the boundary local Casimir energy. A look at  \eqref{t.t.2}, \eqref{rr.2}, and  \eqref{DHBound} indicates that we need to verify the convergence of the summand factor
\begin{align}
\frac{(\mathcal{N}_n^{\rm D})^2 \omega^{\rm D}_n}{4} \left( -\beta_1' \sin(\ell s_n^{\rm D}) + \beta_2' s_n^{\rm D} \cos (\ell s_n^{\rm D}) \right)^2 & =
 \frac{\ell^2 (\beta_1 \beta_2' -\beta_1' \beta_2)^2}{2 \pi^3 (\beta_2')^2 n^3} +  O (n^{-4}),
\label{DiriHBound1}
\end{align}
where the right-hand side of  \eqref{DiriHBound1} is obtained using  \eqref{snD3}. It follows from the $O(n^{-3})$ behavior that the sum defining the boundary local Casimir energy converges absolutely for all values of $t'$ and $t$, and that the limit $t' \to t$ can be taken inside the sum by dominated convergence. We have that 
\begin{align}
 \langle \Omega_\ell^{({\rm D})} | \hat H^{\partial}_{\rm ren}(t) \Omega_\ell^{({\rm D})} \rangle & =   \sum_{n = 1}^\infty \frac{(\mathcal{N}^{\rm D}_n)^2\left( \beta_1' ( \omega_n^{\rm D})^2  - \beta_1 \right)}{4\omega_n^{\rm D}}  \left[-\beta_1' \sin(\ell s_n^{\rm D}) + \beta_2' s_n^{\rm D} \cos(\ell s_n^{\rm D})\right]^2.
\label{DiriHBound}
\end{align}

The summand on the right-hand side of  \eqref{DiriHBound} is $O(n^{-3})$ and the sum is absolutely convergent. The boundary Casimir energy is time-independent.

\subsection{Robin boundary condition at $z = 0$}
\label{subsec:Robin}

In the Robin case, the two-point Wightman function is given by
\begin{align}
\langle \Omega_\ell^{({\rm R})} | \hat \Phi(t,x) \hat \Phi(t',x') \Omega_\ell^{({\rm R})} \rangle = \sum_{n = 1}^\infty \frac{1}{2\omega^{\rm R}_n} \ee^{-\ii \omega^{\rm R}_n(t-t')} \Psi_n^{({\rm R})} (z) \otimes  \Psi_n^{({\rm R})} (z'),
\end{align}
with $(\omega^{\rm R}_n)^2 = (s_n^{\rm R})^2 + m^2$, and with normalized eigenfunctions
\begin{align}
& \Psi_n^{(\rm R)}(z)  = \begin{pmatrix}
				\psi_n^{({\rm R})}(z) \\ 
				\psi_n^{\partial \, ({\rm R})}
\end{pmatrix} 
\nonumber \\      \label{qqq.1}
& = \mathcal{N}_n^{\rm R} 
	\begin{pmatrix}
       \ds   - \frac{\cos \alpha \sin (s_n^{\rm R} z)}{s_n^{\rm R}} +  \sin \alpha \cos( s_n^{\rm R} z)  \\ 
      \ds     \beta_1' \left(\sin \alpha  \cos (\ell s_n^{\rm R})-\frac{\cos \alpha  \sin (\ell s_n^{\rm R})}{s_n^{\rm R}}\right)+\beta_2' (\cos \alpha \cos (\ell s_n^{\rm R})+s_n^{\rm R} \sin \alpha  \sin (\ell s_n^{\rm R}))
    \end{pmatrix},  \\
& \mathcal{N}_n^{\rm R}  := \left\{ -\frac{\sin (2 \alpha )}{4 (s_n^{\rm R})^2} - \left[4 (s_n^{\rm R})^2\Big((\beta_1+ \beta_1' m^2)^2+9(\beta_1')^2 (s_n^{\rm R})^4+2 (\beta_1 + \beta_1' m^2) \beta_1' (s_n^{\rm R})^2 \right. \right. \nonumber \\
& \left. \left. +(s_n^{\rm R})^2 \left((\beta_2 + \beta_2' m^2)+\beta_2' (s_n^{\rm R})^2\right)^2\right) \right]^{-1} \left(\left((s_n^{\rm R})^2-1\right) \cos (2 \alpha )-(s_n^{\rm R})^2-1\right) \nonumber \\
    &  \times \left[\ell \left((\beta_1+\beta_1' m^2)^2+\beta_1'^2 (s_n^{\rm R})^4+2 (\beta_1 + \beta_1' m^2) \beta_1' (s_n^{\rm R})^2+(s_n^{\rm R})^2 \left((\beta_2+\beta_2' m^2) +\beta_2' (s_n^{\rm R})^2\right)^2\right) \right.  \nonumber \\
    & \left. \left.  -(\beta_1+\beta_1' m^2) \left((\beta_2+\beta_2' m^2)+3 \beta_2' (s_n^{\rm R})^2\right)+\beta_1' (s_n^{\rm R})^2 \left((\beta_2+\beta_2' m^2) -\beta_2' (s_n^{\rm R})^2\right)\right] \right\}^{-1/2},
\label{RobinEigen}
\end{align}
which are orthonormal with respect to the inner product on the classical Hilbert space $\mathcal{H}$, i.e.,
\begin{equation}
(\Psi_n^{(\rm R)},\Psi_m^{(\rm R)})_\mathcal{H} = \int_0^\ell \! \dd z \, \psi_n^{({\rm R})}(z) \psi_m^{({\rm R})}(z) + \frac{\psi_n^{\partial \, ({\rm R})} \psi_m^{\partial \, ({\rm R})}}{\rho} = \delta_{nm},
\label{InnerR}
\end{equation}

where $\rho$ is defined in \eqref{w.3}. Note that the eigenfunctions $\Psi_n^{(\rm R)}$ are real valued. The eigenvalues $(\omega_n^{\rm R})^2 = (s_n^{\rm R})^2 + m^2$ obey the asymptotic behavior presented in  \eqref{snR3}. See App. \ref{App:Estimates} for details.

As we have explained in Sec. \ref{sec:QFT}, the bulk renormalized local state polarization and local  Casimir energy are obtained by using the bulk-bulk component of the two-point Wightman function, and for the boundary counterparts we use the boundary-boundary component. They are, respectively,
\begin{subequations}
\begin{align}
 \langle \Omega_\ell^{({\rm R})} | \hat \Phi^{\rm B}(t,z) \hat \Phi^{\rm B}(t',z') \Omega_\ell^{({\rm R})} \rangle & = \sum_{n = 1}^\infty \frac{1}{2\omega_n^{\rm R}} \ee^{-\ii \omega_n^{\rm R}(t-t')} \psi_n^{({\rm R})} (z) \psi_n^{({\rm R})} (z'), \\
 \langle \Omega_\ell^{({\rm R})} | \hat \Phi^\partial(t) \hat \Phi^\partial(t') \Omega_\ell^{({\rm R})} \rangle & = \sum_{n = 1}^\infty \frac{1}{2\omega_n^{\rm R}} \ee^{-\ii \omega_n^{\rm R}(t-t')} \psi_n^{\partial \, ({\rm R})} \psi_n^{\partial \, ({\rm R})}.
\end{align}
\end{subequations}

\subsubsection{The renormalized local state polarization}

For the bulk renormalized local state polarization, we are interested in computing
\begin{align}
\langle \Omega_\ell^{({\rm R})} | ( \hat \Phi^{{\rm B}}_{\rm ren})^2(t,z) \Omega_\ell^{({\rm R})} \rangle = \lim_{(t', z') \to (t,z)}   \left[ \langle \Omega_\ell^{({\rm R})} | \hat \Phi^{\rm B}(t,z) \hat \Phi^{\rm B}(t',z') \Omega_\ell^{({\rm R})} \rangle - H_{\rm M}((t,z),(t',z')) \right].
\label{BPolRob1}
\end{align}

The strategy to obtain the limit on the right-hand side of  \eqref{BPolRob1} is similar to the one in the Dirichlet case. We show the details of the calculation in App. \ref{app:RobVB}, and state directly the result,
\begin{align}
& \langle \Omega_\ell^{({\rm R})} |  ( \hat \Phi^{{\rm B}}_{\rm ren})^2(t,z) \Omega_\ell^{({\rm R})} \rangle %& = \lim_{(t',x') \to (t, x)}\langle \Omega_\ell^{({\rm D})} | \hat \Phi^{\rm B}(t,z) \hat \Phi^{\rm B}(t',z') \Omega_\ell^{({\rm D})} \rangle - H_{\rm M} ( (t,z), (t',z')) \nonumber \\
 =  \frac{1}{4 \pi} \ln \left( \frac{m^2 \ell^2}{4 \pi^2} \right) +   \frac{\gamma}{2 \pi} - \frac{1}{2 \pi}\Re \left( \ee^{-\frac{\ii \pi}{\ell} z} \ln \left(1 - \ee^{\frac{\ii 2 \pi}{\ell} z} \right) \right) \nonumber \\
& + \sum_{n = 1}^\infty \left\{ \left[ \frac{(\mathcal{N}^{\rm R}_n)^2}{2 \omega_n^{\rm R}} \left(\sin^2 \alpha \cos^2 \left( s_n^{\rm R} z\right) + \frac{\cos^2 \alpha}{(s_n^{\rm R})^2} \sin^2 \left( s_n^{\rm R} z\right) \right)  - \frac{1}{ \pi n} \cos^2 \left( \frac{\pi}{\ell}(n-1) z\right) \right]   \right. \nonumber \\
& \left. - \frac{(\mathcal{N}^{\rm R}_n)^2}{2 \omega_n^{\rm R} s_n^{\rm R}}\sin \alpha \cos \alpha      \sin\left(2 s_n^{\rm R} z \right)\right\}.
\label{RobinVacuumBulk}
\end{align}
As is the case for a Dirichlet boundary condition at $z = 0$, the renormalized local state polarization contains a logarithmic divergence at $z = 0$ and $z = \ell$. We stress again that this divergence is integrable, and hence contributes as a finite term to the integrated total bulk state polarization. We also stress that the summand appearing on the right-hand side of  \eqref{RobinVacuumBulk} is $O(n^{-2})$ for all $z \in (0, \ell)$, hence summable. The sum converges absolutely and uniformly in $z$. We also point out that the  bulk renormalized local state  polarization is time-independent.

For the boundary renormalized local state polarization, we wish to compute
\begin{align}
\langle \Omega_\ell^{({\rm R})} | ( \hat  \Phi^{\partial} _{\rm ren})^2(t) \Omega_\ell^{({\rm R})} \rangle = \lim_{t'\to t}    \langle \Omega_\ell^{({\rm R})} | \hat \Phi^\partial(t) \hat \Phi^\partial(t') \Omega_\ell^{({\rm R})} \rangle .
\label{bPolRob1}
\end{align}

The details of the calculations are presented in App. \ref{app:RobVb} and we simply state the result,
\begin{align}
 \langle \Omega_\ell^{({\rm R})} | ( \hat  \Phi^{\partial}_{\rm ren} )^2(t) \Omega_\ell^{({\rm R})} \rangle & = \sum_{n = 1}^\infty \frac{(\mathcal{N}_n^{\rm R})^2}{2\omega_n^{\rm R}} \left[\beta_1' \left(\sin \alpha  \cos (\ell s_n^{\rm R})-\frac{\cos \alpha  \sin (\ell s_n^{\rm R})}{s_n^{\rm R}}\right) \right. \nonumber \\
& \left. +\beta_2' (\cos \alpha \cos (\ell s_n^{\rm R})+s_n^{\rm R} \sin \alpha  \sin (\ell s_n^{\rm R})) \right]^2.
\label{RobinVacuumBoundary}
\end{align}

We note that the boundary renormalized local  state polarization is time-independent. The summand on the right-hand side of  \eqref{RobinVacuumBoundary} is $O(n^{-5})$ and the sum converges absolutely.

\subsubsection{The  local Casimir  energy}

The bulk local Casimir  energy for a Robin boundary condition at $z = 0$ is given by
\begin{align}
& \langle  \Omega_\ell^{({\rm R})} |  \hat H^{\rm B}(t,z) \Omega_\ell^{({\rm R})} \rangle = -\frac{\pi }{24 \ell^2}-\frac{\cot \alpha }{2 \pi  \ell}-\frac{\beta_1'}{2 \pi  \beta_2' \ell}+\frac{m^2}{8 \pi }\left[ 1  + \ln \left( \frac{m^2 \ell^2}{4 \pi^2} \right) \right] + \frac{\gamma m^2 }{4 \pi} \nonumber \\
& - \frac{m^2}{2 \pi} \Re \left[\ee^{-\ii \frac{ 2 \pi }{\ell} z} \ln \left(1-\ee^{\ii  \frac{2 \pi}{\ell} z}\right) \right] - m^2 \sum_{n = 1}^\infty \frac{(\mathcal{N}^{\rm R}_n)^2}{2 \omega_n^{\rm R} s_n^{\rm R}}\sin \alpha \cos \alpha      \sin\left(2 s_n^{\rm R} z \right) \nonumber \\
&+\sum_{n = 1}^\infty \left\{  \frac{(\mathcal{N}^{\rm R}_n)^2}{8 \omega_n^{\rm R}} \left((\omega_n^{\rm R})^2 + (s_n^{\rm R})^2 \right)     \left(\sin^2 \alpha + \frac{\cos^2 \alpha}{(s_n^{\rm R})^2}  \right) - \frac{\pi (n-1)}{2 \ell^2}  \right\} \nonumber \\
& + \frac{m^2}{2}\sum_{n = 1}^\infty \left[ \frac{(\mathcal{N}^{\rm R}_n)^2}{4 \omega_n^{\rm R}} \left(\sin^2 \alpha - (s_n^{\rm R})^{-2} \cos^2 \alpha \right) \cos(2 s_n^{\rm R} z)   - \frac{1}{ 2 \pi n} \cos \left( \frac{2(n-1) \pi}{\ell} z \right) \right]  \nonumber \\
& + \frac{m^2}{2} \sum_{n = 1}^\infty  \left[ \frac{(\mathcal{N}^{\rm R}_n)^2}{2 \omega_n^{\rm R}} \left(\sin^2 \alpha \cos^2 \left( s_n^{\rm R} z\right) + \frac{\cos^2 \alpha}{(s_n^{\rm R})^2} \sin^2 \left( s_n^{\rm R} z\right) \right) - \frac{1}{ \pi n} \cos^2 \left( \frac{\pi}{\ell}(n-1) z\right) \right] .
\label{RobinHBulk}
\end{align}

The computations leading to  \eqref{RobinHBulk} are presented in App. \ref{app:RobHB}. As in the case of the renormalized local state polarization, appearing in \eqref{RobinVacuumBulk}, an integrable, logarithmic divergence appears at $z = 0$ and $z = \ell$, and the result is time independent. The summand defined on the right-hand side of  \eqref{RobinHBulk} is $O(n^{-2})$ for all $z \in (0, \ell)$ and the sum converges absolutely and uniformly in $z$.

In App. \ref{app:RobHb} we  show that the boundary local Casimir energy for a Robin boundary condition at $z = 0$ is given by
\begin{align}
\langle \Omega_\ell^{({\rm R})} | \hat H^\partial_{\rm ren}(t) \Omega_\ell^{({\rm R})} \rangle & = \sum_{n = 1}^\infty \frac{\left[\beta_1' (\omega_n^{\rm R})^2 - \beta_1 \right] (\mathcal{N}_n^{\rm R})^2}{4\omega_n^{\rm R}} \left[\beta_1' \left(\sin \alpha  \cos (\ell s_n^{\rm R})-\frac{\cos \alpha  \sin (\ell s_n^{\rm R})}{s_n^{\rm R}}\right) \right. \nonumber \\
& \left.  +\beta_2' (\cos \alpha \cos (\ell s_n^{\rm R})+s_n^{\rm R} \sin \alpha  \sin (\ell s_n^{\rm R})) \right]^2.
\label{RobinHBoundary}
\end{align}

The summand on the right-hand side of  \eqref{RobinHBoundary} is $O(n^{-3})$ and the sum converges absolutely. We stress that the result for the boundary local Casimir energy is time independent.

This result concludes the study of the system at zero temperature.

\section{The Casimir effect at positive temperature}
\label{CasimirTemp}

In this section we show how to obtain the Casimir effect at positive temperatures. We set the potential $V(z) = 0$, and we always suppose that the   operator $A$ is positive, that is to say, that all its eigenvalues are positive. In particular, this is true if the assumptions of Proposition~\ref{pos}  hold. We are concerned in the case in which $\beta_2' \neq 0$.
We shall follow a strategy that will allow us to write down both the renormalized local state polarization and the local Casimir energy at positive temperature $T = 1/\beta$ in terms of the  quantities at zero temperature. We will deal with the Dirichlet and Robin cases simultaneously.

We write  \eqref{FiniteTMassive} as
\begin{align}\label{we.99}
\langle\hat \Phi(t,z) \hat \Phi(t',z') \rangle_\beta  & = \sum_{n = 1}^\infty \frac{1}{2 \omega_n} \Psi_n (z) \otimes  \overline{{\Psi_n (z')}} \left[ \ee^{-\ii \omega_n(t-t')} + \frac{1}{\ee^{\beta \omega_j} - 1}\left(\ee^{-\ii \omega_n(t-t')} + \ee^{\ii \omega_n(t-t')} 
\right) \right].
\end{align}
%where we recall that $ \beta:= \frac{1}{T}.$

By \eqref{G1+1} and \eqref{we.99} we have,
\begin{align}\label{we.100}
\langle\hat \Phi(t,z)  \hat \Phi(t',z') \rangle_\beta &=  \langle \Omega_\ell | \hat \Phi(t,z) \hat \Phi(t',z') \Omega_\ell \rangle+ \\ \nonumber
& \sum_{n = 1}^\infty \,\frac{1}{2 \omega_n} \,\Psi_n (z)  \otimes \overline{{\Psi_n (z')}}\, \,\frac{1}{\ee^{\beta \omega_j} - 1}\,\left(\ee^{-\ii \omega_n(t-t')} + \ee^{\ii \omega_n(t-t')} 
\right).
\end{align}
We use the following result to write the renormalized local state polarization and local Casimir energy at positive temperature in terms of the quantities at zero temperature.

%For $\Lambda > 0$, it holds that $\sum_{j = 1}^\infty \ee^{-j \Lambda} = 1/\left(\ee^{\Lambda }-1\right)$, and from the fact that $s_n > 0$ and $\beta > 0$, we can write eq. \eqref{FiniteTMassive} as
%\begin{align}
%G^+(\x, \x') & = \sum_{n = 1}^\infty \frac{1}{2 \omega_n} \psi_n (z)\otimes {\psi_n (z')} \left[ \ee^{-\ii \omega_n(t-t')} + \sum_{j = 1}^\infty \ee^{- j \beta \omega_n} \left(\ee^{-\ii \omega_n(t-t')} + \ee^{\ii \omega_n(t-t')} \right) \right].
%\end{align}

\begin{lemma}
\label{Lem:term}
Let $\Psi_n$, $s_n$ and $\omega_n$ be as in the cases of Dirichlet or Robin boundary conditions at $z = 0$, $\Psi_n^{({\rm D})}$ ( \eqref{t.t.2}, \eqref{DiriEigen}), respectively $s_n^{\rm D}$ ( . \eqref{snD3}) and $\omega_n^{\rm D} = \left[(s_n^{\rm D})^2 + m^2 \right]^{1/2}$ or $\Psi_n^{({\rm R})}$ ( \eqref{qqq.1}, \eqref{RobinEigen}), $s_n^{\rm R}$ ( \eqref{snR3}) and $\omega_n^{\rm R}= \left[(s_n^{\rm R})^2 + m^2 \right]^{1/2}$. The sums defined by the tensor-product components
\begin{subequations}
\begin{align}
S_{\rm B} &:= \sum_{n = 1}^\infty \frac{1}{2 \omega_n} \psi_n (z) {\psi_n (z')}
 \frac{1}{\ee^{\beta \omega_j} - 1} \left(\ee^{-\ii \omega_n(t-t')} + \ee^{\ii \omega_n(t-t')} \right), \nonumber \\
 S_{\rm \partial} &:= \sum_{n = 1}^\infty \frac{1}{2 \omega_n} \psi_n^{\partial} {\psi_n^\partial}
 \frac{1}{\ee^{\beta \omega_j} - 1} \left(\ee^{-\ii \omega_n(t-t')} + \ee^{\ii \omega_n(t-t')} \right),
\end{align}
\end{subequations}
converge absolutely for $\beta > 0,$ and uniformly for any values of $0 \leq  z, z' \leq  \ell$ and $t, t' \in \mathbb{R}$.
\end{lemma}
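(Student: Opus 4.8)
\noindent\emph{Proof plan.} The plan is to prove both statements simultaneously by the Weierstrass $M$-test: I will majorize the $n$-th term of $S_{\rm B}$ and of $S_\partial$ by a sequence $M_n$ that is independent of $z,z'\in[0,\ell]$ and of $t,t'\in\mathbb R$ and satisfies $\sum_n M_n<\infty$ for every $\beta>0$. Three ingredients enter. First, the oscillatory part is harmless: $\big|\ee^{-\ii\omega_n(t-t')}+\ee^{\ii\omega_n(t-t')}\big|=2\big|\cos(\omega_n(t-t'))\big|\le 2$ for all real $t,t'$. Second, the eigenfunction data multiplying it are controlled uniformly in $n$, $z$, $z'$. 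Third --- the decisive point for $S_{\rm B}$ --- the thermal weight $(\ee^{\beta\omega_n}-1)^{-1}$ decays exponentially in $n$.

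\noindent For the eigenfunction bounds I would argue as follows. Each $\Psi_n$ is a genuine normalized eigenvector of the selfadjoint operator $A$, so every $\mathcal N_n^{\rm D}$, $\mathcal N_n^{\rm R}$ is finite and strictly positive; feeding the asymptotics $s_n^{\rm D},s_n^{\rm R}\to+\infty$ of App.~\ref{App:Estimates} into the explicit formulae \eqref{DiriEigen} and \eqref{RobinEigen} shows that $(\mathcal N_n^{\rm D})^2$ and $(\mathcal N_n^{\rm R})^2$ converge to finite, strictly positive limits, hence these sequences are bounded, say by $\mathcal N_\ast$. Since $\psi_n$ is continuous on the compact interval $[0,\ell]$ it is bounded for each $n$, and from \eqref{t.t.2}, \eqref{qqq.1} together with $|\sin|,|\cos|\le 1$ and $s_n\to+\infty$ one reads off $C_1:=\sup_n\sup_{z\in[0,\ell]}|\psi_n(z)|<\infty$ in both the Dirichlet and the Robin case. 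For the boundary data $\psi_n^\partial$ a direct estimate is too crude --- the term $\beta_2' s_n\cos(\ell s_n)$ makes the naive bound grow like $s_n$ --- so I would instead invoke the cancellation forced by the transcendental eigenvalue equation at $z=\ell$, which is exactly the content of \eqref{Dpolabound} in the Dirichlet case and of the remark after \eqref{RobinVacuumBoundary} in the Robin case, namely $\dfrac{(\mathcal N_n)^2}{2\omega_n}|\psi_n^\partial|^2=O(n^{-5})$. Note that this quantity is precisely the coefficient standing in front of the oscillatory and thermal factors in $S_\partial$, so in particular it is bounded, and even summable on its own.

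\noindent For the thermal weight: positivity of $A$ together with the discreteness of its spectrum (eigenvalues accumulating only at $+\infty$) gives $\omega_n\ge\omega_1>0$, hence $0<(\ee^{\beta\omega_n}-1)^{-1}\le(\ee^{\beta\omega_1}-1)^{-1}$ for all $n$; and the eigenvalue asymptotics \eqref{snD3}, \eqref{snR3} of App.~\ref{App:Estimates} give $\omega_n\ge c\,n$ for some $c>0$ and all large $n$ (indeed $\omega_n=\pi n/\ell+O(1)$), so that $(\ee^{\beta\omega_n}-1)^{-1}\le 2\ee^{-\beta\omega_n}\le 2\ee^{-\beta c n}$ for $n$ large. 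Combining the three ingredients, for $n$ large the $n$-th term of $S_{\rm B}$ has modulus at most $\dfrac{C_1^2}{2\omega_1}\cdot 4\,\ee^{-\beta c n}$ and that of $S_\partial$ has modulus at most $\dfrac{2 C_2}{(\ee^{\beta\omega_1}-1)}\,n^{-5}$, both uniformly in $t,t',z,z'$, while the finitely many remaining terms are individually finite. Since $\sum_{n}\ee^{-\beta c n}<\infty$ and $\sum_n n^{-5}<\infty$, the $M$-test delivers absolute convergence of $S_{\rm B}$ and $S_\partial$, uniform in $z,z'\in[0,\ell]$ and $t,t'\in\mathbb R$, which is the assertion.

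\noindent The main obstacle is the uniform control of the eigenfunction data, not anything about the series themselves: one must (a) check, from \eqref{DiriEigen}/\eqref{RobinEigen} and the eigenvalue asymptotics, that the normalizations $\mathcal N_n$ stay bounded and bounded away from $0$ as $n\to\infty$, and (b) recognise that $\psi_n^\partial$ does not actually grow --- its apparent $O(s_n)$ size is annihilated by the boundary condition at $z=\ell$, which is the arithmetic already performed in deriving \eqref{Dpolabound}. Everything else is mechanical: the Planck factor's exponential decay, which rests on $\omega_n\ge cn$, is what makes the bulk series summable despite $\tfrac{1}{2\omega_n}|\psi_n(z)\psi_n(z')|$ being only $O(n^{-1})$, whereas for the boundary series the polynomial decay $O(n^{-5})$ already suffices and the thermal factor is merely a bonus.
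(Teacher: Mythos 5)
Your proposal is correct and follows essentially the same route as the paper: both arguments majorize the summands uniformly in $t,t',z,z'$ by combining a uniform (in $z,z'$) polynomial bound on the eigenfunction data --- $O(n^{-1})$ for $\tfrac{1}{2\omega_n}\psi_n(z)\psi_n(z')$ and $O(n^{-5})$ for $\tfrac{1}{2\omega_n}(\psi_n^\partial)^2$, the latter being exactly \eqref{Dpolabound} and \eqref{RobVb1} --- with the bound $\bigl|\ee^{-\ii\omega_n(t-t')}+\ee^{\ii\omega_n(t-t')}\bigr|\le 2$ and the thermal factor $(\ee^{\beta\omega_n}-1)^{-1}$, whose exponential decay (via $\omega_n\gtrsim n$ from \eqref{snD3}, \eqref{snR3}) you make slightly more explicit than the paper does. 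The only cosmetic difference is that you trade the paper's $O(n^{-1})$ bulk bound for the cruder constant $C_1^2/(2\omega_1)$, which is harmless since the exponential factor carries the summability.
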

\begin{proof}
It follows from  the eigenfunctions $\psi_n$ in the Dirichlet and Robin cases and   from the large $n$ estimates for $s_n$  (see  \eqref{t.t.2}, \eqref{DiriEigen}, \eqref{snD3}, \eqref{Dpolabound}, and \eqref{qqq.1}, \eqref{RobinEigen}, \eqref{snR3}, and \eqref{RobVb1},
  respectively)  that there exists polynomially bounded functions $P^{\rm B}, P^\partial: \mathbb{N} \to \mathbb{R}^+ $ such that for  $z, z' \in (0,\ell),$
\begin{subequations}
\begin{align}
\left| \frac{1}{2 \omega_n} \psi_n (z) {\psi_n (z')} \right| \leq P^{{\rm B}}(n) = O(n^{-1}), \label{PBO1} \\
\left| \frac{1}{2 \omega_n} \left(\psi_n^{\partial}\right)^2 \right| \leq P^{\partial}(n) = O(n^{-5}), \label{Pb0-5}
\end{align}
\end{subequations}
where the right-hand side of  \eqref{PBO1} is uniform for $z, z' \in (0,\ell)$.

It then follows that
\begin{subequations}
\begin{align}
\sum_{n = 1}^\infty \left| \frac{1}{2 \omega_n} \psi_n (z) \psi_n(z')
 \frac{1}{\ee^{\beta \omega_j} - 1} \left(\ee^{-\ii \omega_n(t-t')} + \ee^{\ii \omega_n(t-t')} \right) \right| \leq \sum_{n = 1}^\infty  \frac{2 P^{{\rm B}}(n)}{\ee^{\beta \omega_j} - 1} < \infty, \\
 \sum_{n = 1}^\infty \left| \frac{1}{2 \omega_n} \left( \psi_n^{\partial}\right)^2
 \frac{1}{\ee^{\beta \omega_j} - 1} \left(\ee^{-\ii \omega_n(t-t')} + \ee^{\ii \omega_n(t-t')} \right) \right| \leq \sum_{n = 1}^\infty  \frac{2 P^{\partial}(n)}{\ee^{\beta \omega_j} - 1} < \infty.
\end{align}
\end{subequations}
\end{proof}
We define the renormalized local state polarization  at positive temperature as follows,
\begin{subequations}
\label{DPolaT}
\begin{align}
 \langle(\hat \Phi^{\rm B} )^2(t,z) \rangle_{\beta, {\rm ren} }:= & \lim_{(t',z') \to (t,z)} \left[ \langle\hat \Phi^{\rm B}(t,z)  \hat \Phi^{\rm B}(t',z') \rangle_\beta   - H_{\rm M}((t,z),(t',z')) \right], \\
\langle( \hat \Phi^{\partial} )^2(t) \rangle_{\beta, {\rm ren}} &:= \lim_{t' \to t}  \langle\hat \Phi^{\partial}(t)  \hat \Phi^{\partial}(t') \rangle_\beta .
\end{align}
\end{subequations}

It follows from \eqref{we.100} and Lemma \ref{Lem:term} that the bulk and boundary renormalized local state polarizations at positive temperature are given by
\begin{subequations}
\label{FiniteTpolarization}
\begin{align}
\langle(\hat \Phi^{\rm B} )^2(t,z) \rangle_{\beta, {\rm ren} } & = \langle \Omega_\ell^{{\rm (D/R)}} | ( \hat \Phi^{{\rm B}}_{\rm ren})^2(t, z) \Omega_\ell^{{\rm (D/R)}} \rangle + \sum_{n = 1}^\infty  \frac{\left[\psi_n^{\rm (D/R)} (z) \right]^2}{ \omega_n^{\rm (D/R)} \left(\ee^{\beta \omega_n^{\rm (D/R)}}-1\right)}, \label{TBulkPola}\\
\langle( \hat \Phi^{\partial} )^2(t) \rangle_{\beta, {\rm ren}} & = \langle \Omega_\ell^{{\rm (D/R)}} | ( \hat  \Phi^{\partial} )^2_{\rm ren}(t) \Omega_\ell^{{\rm (D/R)}} \rangle + \sum_{n = 1}^\infty  \frac{\left[\psi_n^{\partial \, \rm (D/R)} \right]^2 }{ \omega_n^{\rm (D/R)} \left(\ee^{\beta \omega_n^{\rm (D/R)}}-1\right)}. \label{TBoundPola}
\end{align}
\end{subequations}

The sum on the right-hand side of  \eqref{TBulkPola} is absolutely convergent and converges exponentially fast for all $z \in (0, \ell)$. The sum on the right-hand side of \eqref{TBoundPola} is absolutely convergent and converges exponentially fast.

We define  the local Casimir energy  at positive temperature in the bulk and in the boundary  as follows,
\begin{subequations}
\begin{align}
\langle \hat H^{{\rm B}}_{\rm ren}(t,z) \rangle_\beta &:= \lim_{(t',z') \to (t,z)} \frac{1}{2}\left[ \left(\partial_t \partial_{t'} + \partial_z \partial_{z'} + m^2 \right) \langle\hat \Phi^{\rm B}(t,z)  \hat \Phi^{\rm B}(t',z') \rangle_\beta \right. \nonumber \\
& \left. - \left(\partial_t \partial_{t'} + \partial_z \partial_{z'}  + m^2 \right) H_{\rm M}((t,z),(t',z')) \right], \label{we.101} \\ \label{we.102}
\langle  \hat H^{\partial}_{\rm ren}(t)\rangle_\beta &= \lim_{t' \to t} \frac{1}{2}\left(\beta_1'\partial_t \partial_{t'} -\beta_1 \right) \langle\hat \Phi^{\partial}(t,z)  \hat \Phi^{\partial}(t',z') \rangle_\beta.
\end{align}
\end{subequations}
Then, by \eqref{we.100}, \eqref{we.101}, and \eqref{we.102} we have, 

\begin{subequations}
\label{FiniteTEnergy}
\begin{align}
 \langle& \hat H^{{\rm B}}_{\rm ren}(t, z) \rangle_\beta  = \langle \Omega_\ell^{\rm (D/R)} | \hat H^{{\rm B}}_{\rm ren}(t, z) \Omega_\ell^{{\rm (D/R)}} \rangle \nonumber \\
& + \sum_{n = 1}^\infty \left[\frac{\left( (\omega_n^{\rm (D/R)})^2 + m^2\right)}{2 \omega_n^{\rm (D/R)}} \frac{\left[\psi_n^{\rm (D/R)} (z)\right]^2}{\ee^{\beta \omega_n^{\rm (D/R)}}-1} + \frac{1}{2 \omega_n^{\rm (D/R)}} \frac{\left[{\partial_z\psi_n^{\rm (D/R)}} (z)\right]^2}{\ee^{\beta \omega_n^{\rm (D/R)}}-1} \right], \label{HTbulk} \\
& \langle  \hat H^{\partial}_{\rm ren}(t) \rangle_\beta  = \langle \Omega_\ell^{{\rm (D/R)}} | \hat H^{\partial}_{\rm ren}(t) \Omega_\ell^{{\rm (D/R)}} \rangle + \sum_{n = 1}^\infty \frac{\left( \beta_1' (\omega_n^{\rm (D/R)})^2 -\beta_1 \right)}{2 \omega_n^{\rm (D/R)}} \frac{\left[\psi_n^{\partial \, \rm (D/R)}\right]^2 }{\ee^{\beta \omega_n^{\rm (D/R)}}-1}, \label{HTboundary}
\end{align}
\end{subequations}
where the sums appearing on the right-hand side of  \eqref{HTbulk} and \eqref{HTboundary} are absolutely convergent by an adaptation of Lemma \ref{Lem:term} and converge exponentially fast. In particular, it can be verified that  polynomially bounded functions such as the ones appearing on  \eqref{PBO1} and \eqref{Pb0-5} can be found for the sums appearing on the right-hand side of  \eqref{HTbulk} and \eqref{HTboundary}. Furthermore, the sum on the right-hand side of  \eqref{HTbulk} converges uniformly in $z$.

We note that, as expected, when $\beta \to \infty$, the results obtained in this section for the renormalized local state polarization and local Casimir energy reduce to the zero temperature ones.

\section{Numerical examples}
\label{sec:Numerics}

In this section we present some numerical examples for the local Casimir energy at temperature zero in the case in which we set a Dirichlet boundary condition at $z = 0$. We fix the parameters of the problem to $\beta_1 = -1$, $\beta_1' = 1$, $\beta_2 = 1$, $m^2 = 1$ and $\ell = 1$, and compare three representative cases, for $\beta_2' = -0.5$, $\beta_2' = -0.05$ and $\beta_2' = 0.5$ in Fig. \ref{fig:beta2p0.5}, \ref{fig:beta2p0.05} and \ref{fig:beta2p-0.5} respectively, in which the total integrated Casimir energy can be positive or negative. While the case $\beta_2' = 0.5$ does not satisfy the hypotheses of Prop. \ref{pos}, our numerics verify that the operator $A$ is positive in this case too. In each case, as $z \to 0$ we observe a negative logarithmic divergence and as $z \to \ell = 1$ a positive logarithmic divergence for the local Casimir energy.

The plots presented in this section are obtained by numerically approximating the local Casimir energy using the first 50 eigenvalues in each case. More precise numerical results can be carried out in cases of interest by computing a larger number of eigenvalues.

We will denote by
\begin{align}
E_0(t) := \int_{0}^\ell \! \dd z \, \langle \Omega_\ell^{({\rm D})} | & \hat H^{{\rm B}}_{\rm ren}(t,z) \Omega_\ell^{({\rm D})} \rangle
\end{align}
the integrated Casimir energy at zero temperature in the bulk, i.e., the bulk total energy of the ground state. Note that in our cases, due to the conservation of the local energy in time, $E_0$ is time-independent.

\begin{figure}[!ht]
\centering
%\begin{tabular}{cc}
%(a) & (b) \\
\includegraphics[width=0.45\textwidth]{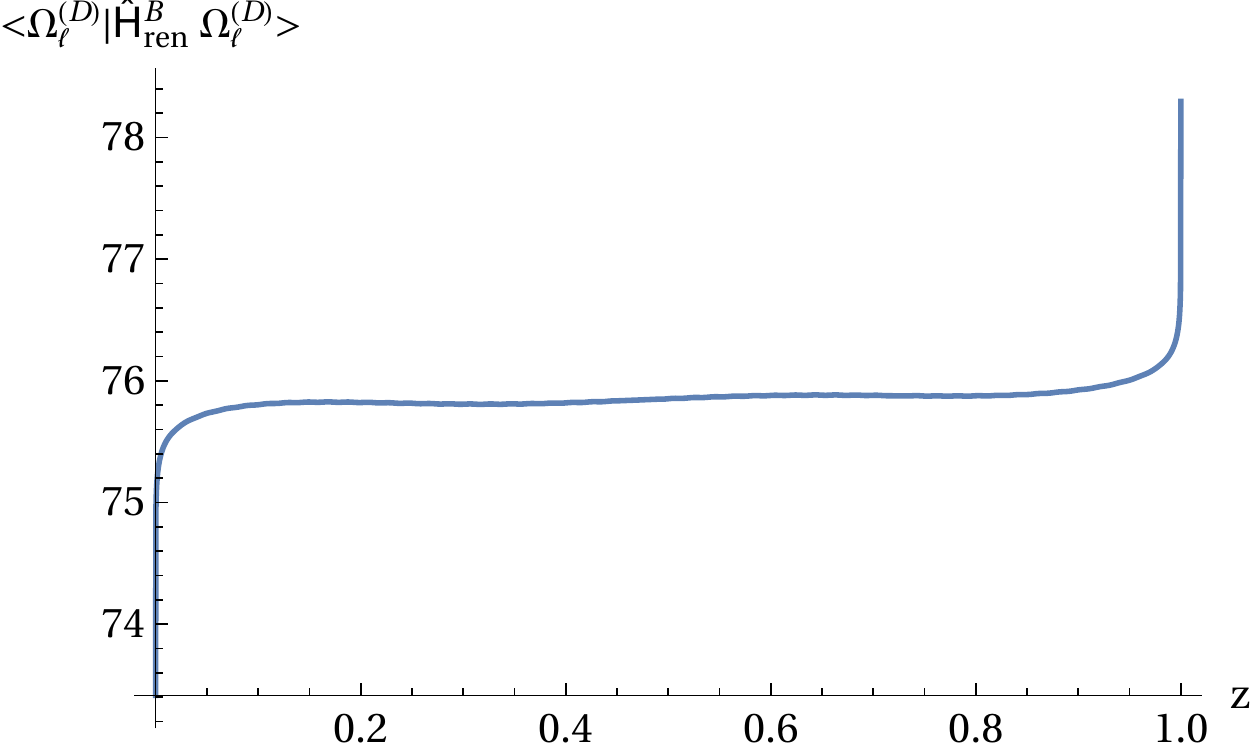}%& 
%\includegraphics[width=0.45\textwidth]{RecedingMirrorPosOmega.eps}
%\end{tabular}
\caption{Case $\beta_2' = -0.5$. The boundary Casimir energy is $\langle \Omega_\ell^{({\rm D})} | \hat H^{\partial}_{\rm ren} \Omega_\ell^{({\rm D})} \rangle \approx 0.23$. The integrated bulk Casimir energy is $E_0 \approx 75.85$.}
\label{fig:beta2p0.5}
\end{figure}

\begin{figure}[!ht]
\centering
%\begin{tabular}{cc}
%(a) & (b) \\
\includegraphics[width=0.45\textwidth]{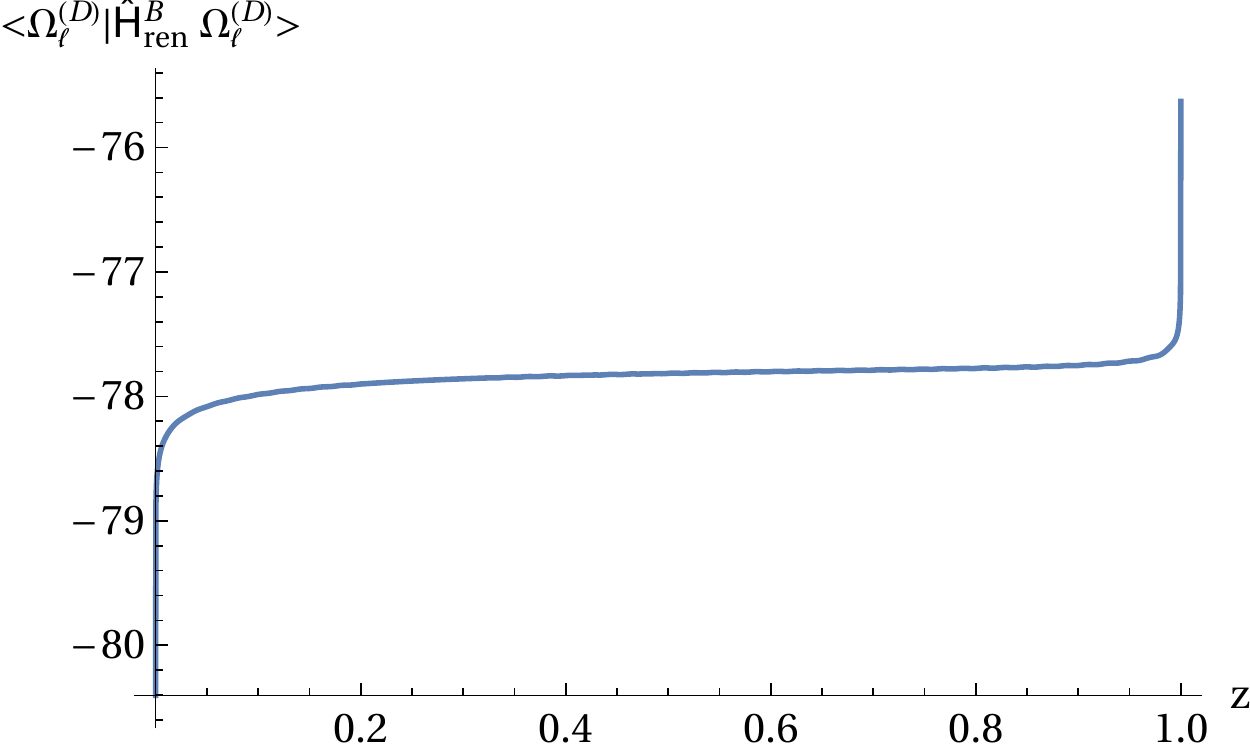}%& 
%\includegraphics[width=0.45\textwidth]{RecedingMirrorPosOmega.eps}
%\end{tabular}
\caption{Case $\beta_2' = -0.05$. The boundary Casimir energy is $\langle \Omega_\ell^{({\rm D})} | \hat H^{\partial}_{\rm ren} \Omega_\ell^{({\rm D})} \rangle \approx 0.73$. The integrated bulk Casimir energy is $E_0 \approx -77.84$.}
\label{fig:beta2p0.05}
\end{figure}

\begin{figure}[!ht]
\centering
%\begin{tabular}{cc}
%(a) & (b) \\
\includegraphics[width=0.45\textwidth]{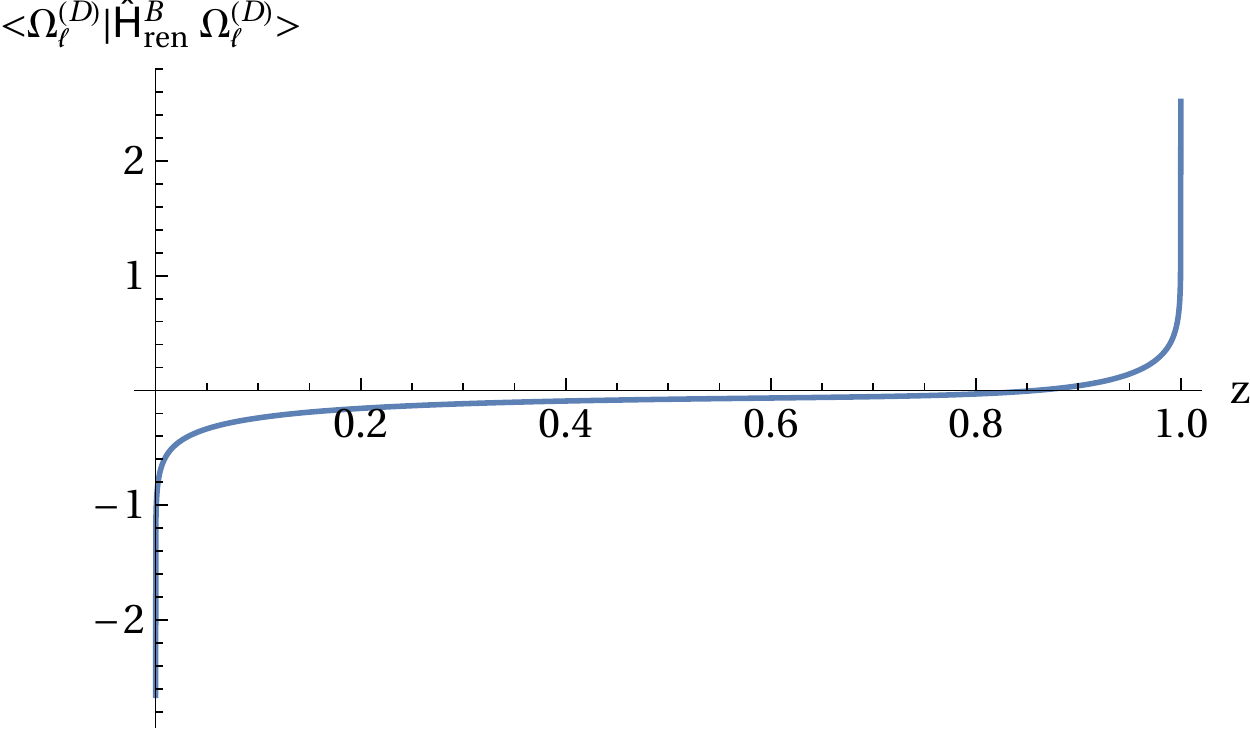}%& 
%\includegraphics[width=0.45\textwidth]{RecedingMirrorPosOmega.eps}
%\end{tabular}
\caption{Case $\beta_2' = 0.5$. The boundary Casimir energy is $\langle \Omega_\ell^{({\rm D})} | \hat H^{\partial}_{\rm ren} \Omega_\ell^{({\rm D})} \rangle \approx 0.43$. The integrated Casimir energy is $E_0 \approx -0.08$.}
\label{fig:beta2p-0.5}
\end{figure}

\newpage

\section{Final remarks}
\label{Sec:Conclusions}

Throughout this work  we have obtained the renormalized local state polarization and local Casimir energy for a system consisting of a bulk scalar field defined on the interval $[0, \ell]$ coupled  to a boundary observable defined on the right-hand end of the interval. The coupling between the bulk scalar field and the boundary observable  is implemented through a dynamical boundary condition for the bulk field. We have given expressions for the  renormalized bulk  and boundary local state polarization and local Casimir energy at zero temperature in Section \ref{Casimir} and in Section~\ref{CasimirTemp} at positive temperature. Our computations reveal that the renormalized local state   polarization and the local Casimir energy are conserved in time, both for the bulk scalar field and for the boundary observable, and we also show that in the case of the bulk they display logarithmic (integrable) divergences near the boundaries of the interval. 

These divergences occur also for non-dynamical boundary conditions in the Dirichlet and Robin class (see e.g. \cite{Fulling:1989nb}), and we stress that they do not obstruct the total (integrated) Casimir energy from being finite, since they are integrable. Indeed, in Section~\ref{sec:Numerics} we have explored numerically a sample of cases and obtained an approximation of the integrated energy numerically.

To the best of our knowledge, the Casimir energy for fields with dynamical boundary conditions had only been studied in \cite{Fosco:2013wpa} at zero temperature, but in that case only the integrated Casimir energy was obtained, and no information on the local Casimir energy, which we study here, can be directly inferred from their computational method.

The work that we have carried out here can be generalized in several directions. First, by using Fourier transform methods, one can generalize the case of the interval to $n$ dimensions. In this case, one has field theory in the bulk and in the boundary. Second,  one can study other species of linear fields, such as the Maxwell, Dirac or Proca fields. Third, we have only studied the static Casimir effect, but the dynamical Casimir effect \cite{Dodonov:2020eto} should also be of great interest. See, e.g., refs. \cite{Davies:1976ei, Good:2013lca, Juarez-Aubry:2014jba} for the dynamical Casimir effect in the context of quantum field theory and for analogies with black hole radiation and formation. It is particularly interesting the case in which the coefficients $\beta_1$, $\beta_1'$, $\beta_2$ and $\beta_2'$ are time-dependent \cite{Fosco:2013wpa}, \cite{Wilson:2011}, in which case one should have so-called creation of particles.

%\section{Final remarks}
%\label{Sec:Conclusions}
%
%Dynamical Casimir effect \cite{Dodonov:2020eto}. Davies-Fulling-Unruh, my paper with Jorma.
%
%Higher dimensions, by Fourier transform etc., but Casimir becomes more complicated.
%
%Other field types.
%
%Extension to AdS \cite{Dappiaggi:2018pju}.

%=====================================================================================================
% ACKNOWLEDGMENTS
%=====================================================================================================
\section*{Acknowledgments}
Benito A. Ju\'arez-Aubry is supported by a DGAPA-UNAM Postdoctoral Fellowship. This paper was partially written while Ricardo Weder was visiting the Institut de Math\'ematique d'Orsay, Universit\'e  Paris-Saclay.  Ricardo Weder thanks Christian G\'erard for his kind hospitality.

\appendix
%=====================================================================================================
% APPENDIX
%=====================================================================================================

\section{Useful formulae}
\label{app:UsefulFormulae}

Let $\Lambda, k \in \mathbb{R}$. We have the following formula
\begin{align}
\sum_{n = 1}^\infty \frac{\ee^{\ii \Lambda n}}{n^k} = {\rm Li}_k\left(\ee^{\ii \Lambda}\right),
\end{align}
where ${\rm Li}_k$ is the polylogarithm or Jonqui\`ere function of order $k$, see e.g. \cite[Eq. 25.12.10]{NIST}. In particular, throughout the paper we will make use of the following three formulae for $k = 1$, $k = 0$ and $k = -1$,
\begin{subequations}
\begin{align}
\sum_{n = 1}^\infty \frac{\ee^{\ii \Lambda n}}{n} & = - \ln \left(1 - \ee^{\ii \Lambda} \right), \label{sumJonq1} \\
\sum_{n = 1}^\infty \ee^{\ii \Lambda n} & = \frac{\ee^{\ii \Lambda }}{1-\ee^{\ii \Lambda }}, \label{sumJonq0}\\
\sum_{n = 1}^\infty n \, \ee^{\ii \Lambda n} & = \frac{\ee^{\ii \Lambda }}{\left(1-\ee^{\ii \Lambda }\right)^2}. \label{sumJonq-1}
\end{align}
\end{subequations}

Here, $\ln z$ is the principal branch of the logarithm with the argument of $z$ in $(-\pi, \pi).$ Equation \eqref{sumJonq1} can be easily obtained from the integral representation of ${\rm Li}_1$ \cite[Eq. 25.12.11]{NIST}, whereby
\begin{align}
\sum_{n = 1}^\infty \frac{\ee^{\ii \Lambda n}}{n} = {\rm Li}_1\left(\ee^{\ii \Lambda}\right) = \ee^{\ii \Lambda} \int_0^\infty \! \dd x \frac{1}{\ee^x - \ee^{\ii \Lambda}} = - \ln \left(1 - \ee^{\ii \Lambda} \right),
\label{Jonq1Int}
\end{align}
while  \eqref{sumJonq0} can be obtained from  \eqref{sumJonq1} by taking derivatives in $\Lambda$ on the left-hand and right-hand sides. Similarly,  \eqref{sumJonq-1}   
is obtained taking the derivative in $\Lambda$ on the left-hand and right-hand sides of \eqref{sumJonq0}.
Let $x \in \mathbb{R}$. It holds that
\begin{equation}
\left|\ee^{\ii x} - \sum_{k = 0}^{n} \frac{(\ii x)^k}{k!} \right| \leq \frac{|x|^{n+1}}{(n+1)!},
\label{LemmaBound}
\end{equation}
for any $n \in \mathbb{N}_0$. This follows from estimating the Lagrange Remainder in the Taylor series for $\ee^{\ii x}$ to order $n$, cf. formulae 0.317.2 and 0.317.3 in \cite{Gradshteyn7}.

\section{Asymptotic estimates for the eigenvalues}
\label{App:Estimates}

In this appendix, we obtain asymptotic formulae for the eigenvalues, $\omega^2_n, n=1,\dots,$ of our  classical problem in Section~\ref{sec:Class},  with $V(z) = 0$ in the cases where there is a Dirichlet or Robin boundary condition at $z = 0$. The results that we present here are an improvement on the asymptotics presented in \cite[Sec. 4]{fulton:1977}. 
We assume throughout this appendix that $\rho > 0$ and $\beta_2' \neq 0$. The detailed assumptions of Proposition~\ref{pos} , which ensure that the eigenvalues for problem \eqref{w.2} be all positive, are not needed here. The case for $\beta_2' = 0$ can also be treated, but estimates like the ones that we shall present in this appendix should be obtained separately, and we will not be concerned with this case.  Furthermore, we use  use the dispersion relation, $\omega_n^2 = s_n^2 + m^2, n=1,\dots,$ to write \eqref{w.2} as follows,
\begin{align}
\label{1+1massless}
\left\{
                \begin{array}{l}
                  -\partial^2_z \varphi = s_n^2 \varphi, \\
                  \cos \alpha \varphi(0) + \sin \alpha   \partial_z \varphi(0) = 0, \, \alpha \in [0, \pi), \\
                  -\left[(\beta_1 + \beta_1' m^2) \varphi(\ell) - (\beta_2 + \beta_2' m^2) \partial_z\varphi(\ell) \right] = s_n^2 \left[ \beta_1' \varphi(\ell) - \beta_2' \partial_z\varphi(\ell) \right].
\end{array}
\right.
\end{align}

We are concerned with obtaining large eigenvalue estimates for problem \eqref{1+1massless}. 

In order to obtain the eigenvalues, we seek to find the roots of the Wronskian, which, following standard techniques \cite{Titchmarsh:1946} in the problem at hand, can be written as  (see \cite[Eq. (3.4)]{fulton:1977})
\begin{equation}
\omega(s^2) = (\beta_1' s^2 + \beta_1 + \beta_1' m^2) \phi_{s^2}(\ell) - (\beta_2' s^2
 + \beta_2 + \beta_2' m^2) \partial_z{\phi}_{s^2}(\ell),
 \label{Wronsk}
\end{equation}
where the eigenfunctions $\phi_{s^2}$ satisfy
\begin{align}
\label{BC0}
                  -\partial^2_z \phi_{s^2} & = s^2  \phi_{s^2},  \\
\begin{pmatrix}
           \phi_{s^2}(0)  \\
           \partial_z\phi_{s^2}(0)
         \end{pmatrix} 
& = \begin{pmatrix}
           \sin\alpha  \\
           -\cos\alpha
         \end{pmatrix}, 
\end{align}
with $\alpha = 0$ for Dirichlet boundary condition, $\alpha \in ( 0, \pi)$ for Robin boundary condition, at $z = 0,$ and, in particular, $\alpha = \pi/2$ in the Neumann case. 

In the case of Dirichlet boundary condition, we solve  \eqref{BC0} with $\phi_{s^2}(z) = -\sin (s z)/s$. For Robin boundary condition, we set $\phi_{s^2}(z) = - \cos \alpha \sin (s z)/s +  \sin \alpha \cos( s z)$ (in particular, for the Neumann case $\phi_{s^2}(z) = \cos (s z)$).

We henceforth use labels ${\rm D}$ and ${\rm R}$ for Dirichlet and Robin quantities, respectively.

\subsection{Dirichlet boundary condition at $z = 0$}

Following \cite[Sec. 4]{fulton:1977}, Rouch\'e's theorem (see e.g. \cite{Conway:1978}) implies that, for sufficiently large $n$,
\begin{equation}
s_n^{\rm D} = \frac{(n-1/2) \pi}{\ell} + \delta_n^{\rm D}, \hspace{1cm} \delta_n^{\rm D} = O\left( n^{-1}\right).
\label{snD1}
\end{equation} 

Using  \eqref{Wronsk}, we need to solve for
\begin{equation}
(\beta_1' (s_n^{\rm D})^2 + \beta_1 + \beta_1' m^2) \sin (s_n^{\rm D} \ell) - s_n^{\rm D} (\beta_2' (s_n^{\rm D})^2
 + \beta_2 + \beta_2' m^2) \cos (s_n^{\rm D} \ell) = 0,
\end{equation}
which upon using  \eqref{snD1} can be written as
\begin{align}
& \left[\beta_1' \left(\frac{(n-1/2) \pi}{\ell} + \delta_n^{\rm D}\right)^2 + \beta_1 + \beta_1' m^2\right] \cos ( \delta_n^{\rm D} \ell) \nonumber \\
& + \left( \frac{(n-1/2) \pi}{\ell} + \delta_n^{\rm D} \right) \left[\beta_2' \left(\frac{(n-1/2) \pi}{\ell} + \delta_n^{\rm D}\right)^2  + \beta_2 + \beta_2' m^2\right] \sin (\delta_n^{\rm D} \ell) = 0.
\label{WronskDiri}
\end{align}

Equation \eqref{WronskDiri} can be expanded using the fact that $\delta_n^{\rm D} = O\left( n^{-1}\right)$, and one obtains that
\begin{equation}
\beta_1' + \pi \beta_2' (n-1/2)  \delta_n^{\rm D} = O\left(n^{-2}\right),
\end{equation}
from where it follows that
\begin{equation}
s_n^{\rm D} = \frac{(n-1/2) \pi}{\ell} + \delta_n^{\rm D}, \hspace{1cm} \delta_n^{\rm D} = -\frac{\beta_1'}{\beta_2'  (n-1/2) \pi} + \epsilon_n^{\rm D}, \hspace{1cm} \epsilon_n^{\rm D} = O\left( n^{-3}\right).
\label{snD2}
\end{equation} 

Inserting  \eqref{snD2} into  \eqref{WronskDiri}, we find the relation
\begin{align}
\beta_1 - \frac{(\beta_1')^3}{3 (\beta_2')^2} + \frac{(\beta_1')^2}{\ell \beta_2'} - \frac{\beta_1' \beta_2}{\beta_2'} + \frac{\pi^3 \beta_2'(n-1/2)^3 \epsilon_n^{\rm D}}{\ell^2} = O \left(n^{-2} \right),
\end{align}
which can be solved to yield
\begin{align}
s_n^{\rm D} & = \frac{(n-1/2) \pi}{\ell} + \delta_n^{\rm D}, \nonumber \\
\delta_n^{\rm D} & = -\frac{\beta_1'}{\beta_2' \pi (n-1/2)} + \frac{ \ell^2 \left(-3 \beta_1 (\beta_2')^2 + 3 \beta_2 \beta_1' \beta_2' + (\beta_1')^3\right)- 3 \ell (\beta_1')^2 \beta_2'}{3 \pi ^3 (\beta_2')^3(n-1/2)^3} + O\left( n^{-5}\right).
\label{snD3}
\end{align} 

The expansion \eqref{snD3} suffices for our purposes, but one can recursively obtain more precise estimates.

\subsection{Robin boundary condition at $z = 0$}

The treatment is analogous to the Dirichlet case. Following \cite[Sec. 4]{fulton:1977}, Rouch\'e's theorem (see e.g. \cite{Conway:1978}) implies that, for sufficiently large $n$,
\begin{equation}
s_n^{\rm R} = \frac{(n-1) \pi}{\ell} + \delta_n^{\rm R}, \hspace{1cm} \delta_n^{\rm R} = O\left( n^{-1}\right).
\label{snR1}
\end{equation} 

Using  \eqref{Wronsk}, we need to solve for
\begin{align}
& (\beta_1' (s_n^{\rm R})^2 + \beta_1 + \beta_1' m^2) \left[-\frac{\cos \alpha \sin (s_n^{\rm R} \ell) }{s_n^{\rm R}} + \sin \alpha \cos(s_n^{\rm R} \ell)  \right] \nonumber \\
& + s_n^{\rm R} (\beta_2' (s_n^{\rm R})^2  + \beta_2 + \beta_2' m^2) \left[ \cos \alpha \cos (s_n^{\rm R} \ell) + s_n^{\rm R} \sin \alpha \sin (s_n^{\rm R} \ell) \right] = 0,
\end{align}
which upon using  \eqref{snR1} can be written as
\begin{align}
& \left(\beta_1' \left(\frac{(n-1) \pi}{\ell} + \delta_n^{\rm R}\right)^2 + \beta_1 + \beta_1' m^2\right) \left[-\frac{\cos \alpha \sin (\delta_n^{\rm R} \ell) }{\frac{(n-1) \pi}{\ell} + \delta_n^{\rm R}} + \sin \alpha \cos(\delta_n^{\rm R} \ell)  \right] \nonumber \\
& + \left(\frac{(n-1) \pi}{\ell} + \delta_n^{\rm R}\right) \left(\beta_2' \left(\frac{(n-1) \pi}{\ell} + \delta_n^{\rm R}\right)^2  + \beta_2 + \beta_2' m^2\right) \nonumber \\
& \times \left[ \cos \alpha \cos (\delta_n^{\rm R} \ell) + \left(\frac{(n-1) \pi}{\ell} + \delta_n^{\rm R}\right) \sin \alpha \sin (\delta_n^{\rm R} \ell) \right] = 0.
\label{WronskRobin}
\end{align}

Equation \eqref{WronskRobin} can be expanded using the fact that $\delta_n^{\rm R} = O\left( n^{-1}\right)$, and one obtains that
\begin{equation}
\sin \alpha  (\pi  \delta_n^{\rm R}(n-1) \beta_2'+\beta_1')+\beta_2' \cos \alpha = O(n^{-2}) ,
\end{equation}
from where it follows that
\begin{equation}
s_n^{\rm R} = \frac{(n-1) \pi}{\ell} + \delta_n^{\rm R}, \hspace{1cm} \delta_n^{\rm R} = -\frac{\beta_1' + \beta_2' \cot \alpha}{\beta_2'  (n-1) \pi} + \epsilon_n^{\rm R}, \hspace{1cm} \epsilon_n^{\rm R} = O\left( n^{-3}\right).
\label{snR2}
\end{equation} 

Iterating as in the Dirichlet case by inserting  \eqref{snR2} into  \eqref{WronskRobin}, we find that
\begin{align}
s_n^{\rm R} & = \frac{(n-1) \pi}{\ell} + \delta_n^{\rm R}, \nonumber \\
\delta_n^{\rm R} & = -\frac{\beta_1' + \beta_2' \cot \alpha}{\beta_2'  (n-1) \pi} + \left( \frac{ \ell^2 \left(-3 \beta_1 (\beta_2')^2 + 3 \beta_2 \beta_1' \beta_2' + (\beta_1')^3\right)- 3 \ell (\beta_1')^2 \beta_2'}{3 \pi ^3 (\beta_2')^3} \right. \nonumber \\
& \left. + \frac{\ell \cot \alpha [-6 \beta_1' + \beta_2' \cot \alpha (3-\ell \cot \alpha)]}{3 \pi ^3 \beta_2'} \right) \frac{1}{(n-1)^3} + O\left( n^{-4}\right).
\label{snR3}
\end{align} 

In the case of Neumann boundary conditions, setting $\alpha = \pi/2$ in the Robin case, one has the expansion
\begin{align}
s_n^{\rm N} & = \frac{(n-1) \pi}{\ell} + \delta_n^{\rm N}, \nonumber \\
\delta_n^{\rm N} & = -\frac{\beta_1' }{\beta_2'  (n-1) \pi} + \frac{ \ell^2 \left(-3 \beta_1 (\beta_2')^2 + 3 \beta_2 \beta_1' \beta_2' + (\beta_1')^3\right)- 3 \ell (\beta_1')^2 \beta_2'}{3 \pi ^3 (\beta_2')^3 (n-1)^3}   + O\left( n^{-4}\right).
\label{snN3}
\end{align}

\section{Calculations with a Robin boundary condition at $z = 0$}
\label{app:Rob}

We present the computations for the renormalized local state  polarization and Casimir energy in the case in which a Robin boundary condition is imposed at $z = 0$.

\subsection{Renormalized local state  polarization in the bulk}
\label{app:RobVB}

We are interested in extracting the coincidence-limit singular behavior of
\begin{subequations}
\begin{align}
& \langle  \Omega_\ell^{({\rm R})} | \hat \Phi^{\rm B}(t,z) \hat \Phi^{\rm B}(t',z') \Omega_\ell^{({\rm R})} \rangle = \sum_{n = 1}^\infty \left[ R_{{\rm B}\, n}^{{\rm V}(1)}(t,t',z,z') +  R_{{\rm B}\, n}^{{\rm V}(2)}(t,t',z,z') +  R_{{\rm B}\, n}^{{\rm V}(3)}(t,t',z,z') \right], \label{SingRobPolBulk} \\
& R_{{\rm B}\, n}^{{\rm V}(1)}(t,t',z,z') := \frac{(\mathcal{N}^{\rm R}_n)^2}{8 \omega_n^{\rm R}} \ee^{-\ii \omega_n^{\rm R} (t-t')}    \left(\sin^2 \alpha +  \frac{\cos^2 \alpha}{(s_n^{\rm R})^2} \right) \left(\ee^{\ii s_n^{\rm R}(z-z')} + \ee^{-\ii s_n^{\rm R}(z-z')}\right), \label{RBV1} \\
& R_{{\rm B}\, n}^{{\rm V}(2)}(t,t',z,z') := \frac{(\mathcal{N}^{\rm R}_n)^2}{8 \omega_n^{\rm R}} \ee^{-\ii \omega_n^{\rm R} (t-t')}    \left(\sin^2 \alpha - \frac{\cos^2 \alpha}{(s_n^{\rm R})^2} \right) \left(\ee^{\ii s_n^{\rm R}(z+z')} + \ee^{-\ii s_n^{\rm R}(z+z')}\right), \label{RBV2}\\
& R_{{\rm B}\, n}^{{\rm V}(3)}(t,t',z,z') := - \frac{(\mathcal{N}^{\rm R}_n)^2}{4 \omega_n^{\rm R}} \ee^{-\ii \omega_n^{\rm R} (t-t')}  \frac{\sin \alpha \cos \alpha}{ \ii s_n^{\rm R}}  \left( \ee^{\ii s_n^{\rm R}(z+z')} - \ee^{-\ii s_n^{\rm R}(z+z')} \right). \label{RBV3}
%& = \sum_{n = 1}^\infty \frac{(\mathcal{N}^{\rm R}_n)^2}{2 \omega_n^{\rm R}} \ee^{-\ii \omega_n^{\rm R} (t-t')}  \left\{ \frac{1}{4} \left(\sin^2 \alpha + (s_n^{\rm R})^{-2} \cos^2 \alpha \right) \left(\ee^{\ii s_n^{\rm R}(z-z')} + \ee^{-\ii s_n^{\rm R}(z-z')}\right) \right. \nonumber \\ 
%& \left.  +  \frac{1}{4} \left(\sin^2 \alpha - (s_n^{\rm R})^{-2} \cos^2 \alpha \right) \left( \ee^{\ii s_n^{\rm R}(z+z')} + \ee^{-\ii s_n^{\rm R}(z+z')}\right) -  \frac{\sin \alpha \cos \alpha}{2 \ii s_n^{\rm R}}  \left( \ee^{\ii s_n^{\rm R}(z+z')} - \ee^{-\ii s_n^{\rm R}(z+z')} \right)  \right\}.
\end{align}
\end{subequations}

It can be seen from the asymptotic estimates \eqref{snR3} that the sum defined by the summand $R_{{\rm B}\, n}^{{\rm V}(1)}$ fails to converge when $t' = t$ and $z' = z$. Let us study this sum in detail. First, we rewrite
\begin{align}
 \sum_{n = 1}^\infty R_{{\rm B}\, n}^{{\rm V}(1)}(t,t',z,z') & = \sum_{n = 1}^\infty \frac{(\mathcal{N}^{\rm R}_n)^2}{8 \omega_n^{\rm R}} \left(\sin^2 \alpha + (s_n^{\rm R})^{-2} \cos^2 \alpha \right) \left\{ \ee^{-\ii \omega_n^{\rm R} (t-t')}  \left(\ee^{\ii s_n^{\rm R}(z-z')} + \ee^{-\ii s_n^{\rm R}(z-z')}\right) \right. \nonumber \\
& -\left.  \left(\ee^{\frac{\ii \pi}{\ell}(n-1) [-(t-t') + (z-z')] } + \ee^{\frac{\ii \pi}{\ell}(n-1) [-(t-t') - (z-z')] }\right)\right\} \nonumber \\
& + \sum_{n = 1}^\infty \frac{(\mathcal{N}^{\rm R}_n)^2}{8 \omega_n^{\rm R}} \left(\sin^2 \alpha + (s_n^{\rm R})^{-2} \cos^2 \alpha \right) \left(\ee^{\frac{\ii \pi}{\ell}(n-1) [-(t-t') + (z-z')] } + \ee^{\frac{\ii \pi}{\ell}(n-1) [-(t-t') - (z-z')] }\right).
\label{SingRobPolBulk1}
\end{align}

The first sum on the right-hand side of  \eqref{SingRobPolBulk1} vanishes in the limit $(t',z') \to (t,z)$, since by estimate \eqref{LemmaBound}
\begin{align}
 \sum_{n = 1}^\infty & \left| \frac{(\mathcal{N}^{\rm R}_n)^2}{8 \omega_n^{\rm R}} \left(\sin^2 \alpha + (s_n^{\rm R})^{-2} \cos^2 \alpha \right) \left(\ee^{\ii [-\omega_n^{\rm R} (t-t') \pm s_n^{\rm R}(z-z')]} - \ee^{\frac{\ii \pi}{\ell}(n-1) [-(t-t') \pm (z-z')] } \right) \right| \nonumber \\
& \leq \sum_{n = 1}^\infty \left| \frac{(\mathcal{N}^{\rm R}_n)^2}{8 \omega_n^{\rm R}} \left(\sin^2 \alpha + (s_n^{\rm R})^{-2} \cos^2 \alpha \right)  \left[-\left(\omega_n^{\rm R} - \frac{\pi}{\ell}(n-1) \right) (t-t') \pm  \left(s_n^{\rm R} - \frac{\pi}{\ell}(n-1) \right)(z-z')\right] \right|. % \nonumber \\
%& \leq \sum_{n = 1}^\infty \left| \frac{(\mathcal{N}^{\rm R}_n)^2}{8 \omega_n^{\rm R}} \left(\sin^2 \alpha + (s_n^{\rm R})^{-2} \cos^2 \alpha \right) \right| \left( \left|\omega_n^{\rm R} - \frac{\pi}{\ell}(n-1) \right| \Delta t +  \left|s_n^{\rm R} - \frac{\pi}{\ell}(n-1) \right|\ell \right)
\label{DCRob1}
\end{align}
%for $t' \in [t-\Delta t, t+\Delta t]$ and $z' \in [0, \ell]$, 
The summand on the right-hand side of \eqref{DCRob1} vanishes as $(t',z') \to (t,z),$ and by \eqref{snR3} it is $O(n^{-2})$ uniformly for $t,t'$ in bounded sets and  $z,z' \in [0,\ell].$ Hence, by dominated convergence, the first sum on the right-hand side of  \eqref{SingRobPolBulk1} vanishes at $ (t',z') \to (t,z).$ The second sum on the right-hand side of  \eqref{SingRobPolBulk1} is therefore the only contribution on the coincidence limit. We rewrite it as
\begin{align}
& \sum_{n = 1}^\infty \frac{(\mathcal{N}^{\rm R}_n)^2}{8 \omega_n^{\rm R}} \left(\sin^2 \alpha + (s_n^{\rm R})^{-2} \cos^2 \alpha \right) \left(\ee^{\frac{\ii \pi}{\ell}(n-1) [-(t-t') + (z-z')] } + \ee^{\frac{\ii \pi}{\ell}(n-1) [-(t-t') - (z-z')] }\right) \nonumber \\
& = \sum_{n = 1}^\infty  \left(\frac{(\mathcal{N}^{\rm R}_n)^2}{8 \omega_n^{\rm R}} \left(\sin^2 \alpha + (s_n^{\rm R})^{-2} \cos^2 \alpha\right) - \frac{1}{4 \pi  n} \right) \left(\ee^{\frac{\ii \pi}{\ell}(n-1) [-(t-t') + (z-z')] } + \ee^{\frac{\ii \pi}{\ell}(n-1) [-(t-t') - (z-z')] }\right) \nonumber \\
& +\sum_{n = 1}^\infty  \frac{1}{4 \pi  n} \left(\ee^{\frac{\ii \pi}{\ell}(n-1) [-(t-t') + (z-z')] } + \ee^{\frac{\ii \pi}{\ell}(n-1) [-(t-t') - (z-z')] }\right).
\label{SingRobPolBulk1-2}
\end{align}

The first sum on the right-hand side of  \eqref{SingRobPolBulk1-2} converges in absolute value and uniformly in $t,t'$ and $z, z'$, since by the asymptotics in  \eqref{snR3}
\begin{align}
\frac{(\mathcal{N}^{\rm R}_n)^2}{8 \omega_n^{\rm R}} \left(\sin^2 \alpha \pm (s_n^{\rm R})^{-2} \cos^2 \alpha \right) = \frac{1}{4 \pi  n} +O\left(n^{-2}\right),
\label{uniftzRBVn1}
\end{align}
so one can take the limit inside the sum, while the second sum in  \eqref{SingRobPolBulk1-2} can be obtained explicitly, using formula \eqref{sumJonq1},
\begin{align}
 \sum_{n = 1}^\infty  \frac{1}{4 \pi  n}  \left(\ee^{\frac{\ii \pi}{\ell}(n-1) [-(t-t') + (z-z')] } + \ee^{\frac{\ii \pi}{\ell}(n-1) [-(t-t') - (z-z')] }\right) & = -\frac{1}{4 \pi } \left[\ee^{\frac{\ii \pi}{\ell}[(t-t')-(z-z')]} \ln \left(1-\ee^{\frac{\ii \pi}{\ell}[-(t-t')+(z-z')]}\right) \right. \nonumber \\
& \left. +\ee^{\frac{\ii \pi}{\ell}[(t-t')+(z-z')]} \ln \left(1-\ee^{-\frac{\ii \pi}{\ell}[(t-t')+(z-z')]}\right)\right],
\label{SingRobPolBulk1-3}
\end{align}
and diverges logarithmically in the coincidence limit, as can be seen from  \eqref{SingRobPolBulk1-3}. Indeed, subtracting the Hadamard bi-distribution  \eqref{HM} we obtain the finite limit
\begin{align}
 \lim_{(t',z') \to (t,z)} \left[\sum_{n = 1}^\infty R_{{\rm B}\, n}^{{\rm V}(1)}(t,t',z,z') - H_{\rm M} ((t,z),(t',z')) \right] & = \frac{1}{4 \pi} \ln \left( \frac{m^2 \ell^2}{4 \pi^2} \right) + \frac{\gamma}{2 \pi} \nonumber \\
& + \sum_{n = 1}^\infty  \left(\frac{(\mathcal{N}^{\rm R}_n)^2}{4 \omega_n^{\rm R}} \left(\sin^2 \alpha + (s_n^{\rm R})^{-2} \cos^2 \alpha\right) - \frac{1}{2 \pi  n} \right).
\label{RBV1lim}
\end{align}

With the aid of estimate \eqref{LemmaBound} and \eqref{snR3}, we can apply the dominated convergence theorem by similar arguments to the ones used for studying the term \eqref{RBV1} to obtain that
\begin{align}
 \lim_{(t',z') \to (t,z)} \sum_{n = 1}^\infty  R_{{\rm B}\, n}^{{\rm V}(2)}(t,t',z,z') & = \sum_{n = 1}^\infty \frac{(\mathcal{N}^{\rm R}_n)^2}{4 \omega_n^{\rm R}}  \left(\sin^2 \alpha - (s_n^{\rm R})^{-2} \cos^2 \alpha \right)  \left[ \cos(2 s_n^{\rm R} z)  - \cos \left(  \frac{2(n-1)\pi}{\ell} z \right) \right] \nonumber \\
& + \lim_{(t',z') \to (t,z)} \sum_{n = 1}^\infty \left[R_{{\rm B}\, n}^{{\rm V}(2)+}(t,t',z,z') + R_{{\rm B}\, n}^{{\rm V}(2)-}(t,t',z,z') \right],
\label{RBV2-2}
\end{align}
with
\begin{align}
 R_{{\rm B}\, n}^{{\rm V}(2)\pm}(t,t',z,z') & :=  \frac{(\mathcal{N}^{\rm R}_n)^2}{8 \omega_n^{\rm R}}  \left(\sin^2 \alpha - (s_n^{\rm R})^{-2} \cos^2 \alpha \right) \ee^{\ii \frac{(n-1) \pi}{\ell} \left[-(t-t')\pm(z+z') \right]}.
\end{align}

Writing the second sum on the right-hand side of  \eqref{RBV2-2} as
\begin{align}
& \lim_{(t',z') \to (t,z)} \sum_{n = 1}^\infty \left[R_{{\rm B}\, n}^{{\rm V}(2)+}(t,t',z,z') + R_{{\rm B}\, n}^{{\rm V}(2)-}(t,t',z,z') - \frac{1}{4 \pi n}\left(  \ee^{\ii \frac{(n-1) \pi}{\ell} \left[-(t-t')+(z+z') \right]} \right. \right. \nonumber \\
& \left. \left. + \ee^{\ii \frac{(n-1) \pi}{\ell} \left[-(t-t')-(z+z') \right]} \right)\right] + \lim_{(t',z') \to (t,z)} \sum_{n = 1}^\infty \frac{1}{4 \pi n}\left(  \ee^{\ii \frac{(n-1) \pi}{\ell} \left[-(t-t')+(z+z') \right]} + \ee^{\ii \frac{(n-1) \pi}{\ell} \left[-(t-t')-(z+z') \right]} \right),
\end{align}
one can use the dominated convergence theorem in view of  \eqref{uniftzRBVn1} to finally obtain that
\begin{align}
\lim_{(t',z') \to (t,z)} \sum_{n = 1}^\infty  R_{{\rm B}\, n}^{{\rm V}(2)}(t,t',z,z')&  = \sum_{n = 1}^\infty \left[ \frac{(\mathcal{N}^{\rm R}_n)^2}{4 \omega_n^{\rm R}} \left(\sin^2 \alpha - (s_n^{\rm R})^{-2} \cos^2 \alpha \right) \cos(2 s_n^{\rm R} z) \right.  \nonumber \\
&   \left.  - \frac{1}{ 2 \pi n} \cos \left( \frac{2(n-1) \pi}{\ell} z \right) \right] - \frac{1}{2 \pi} \Re \left[\ee^{-\ii \frac{ 2 \pi }{\ell} z} \ln \left(1-\ee^{\ii  \frac{2 \pi}{\ell} z}\right) \right],
\label{RBV2-lim}
\end{align}
where we have used  \eqref{sumJonq1} to obtain the last term on the right-hand side of  \eqref{RBV2-lim}.

The third term in  \eqref{SingRobPolBulk}, defined by \eqref{RBV3}, can be seen to converge absolutely (using  \eqref{snR3}) and its limit as $(t',z') \to (t,z)$ can be applied to the summand by dominated convergence, whereby one obtains that
\begin{align}
\lim_{(t',z') \to (t,z)} \sum_{n = 1}^\infty R_{{\rm B}\, n}^{{\rm V}(3)}(t,t',z,z') & = - \sum_{n = 1}^\infty \frac{(\mathcal{N}^{\rm R}_n)^2}{2 \omega_n^{\rm R} s_n^{\rm R}}\sin \alpha \cos \alpha      \sin\left(2 s_n^{\rm R} z \right)
\label{RBV3lim}
\end{align}

Adding up  \eqref{RBV1lim}, \eqref{RBV2-lim} and \eqref{RBV3lim}, we finally obtain that
\begin{align}
& \langle \Omega_\ell^{({\rm R})} |  ( \hat \Phi^{{\rm B}}_{\rm ren})^2(t,z) \Omega_\ell^{({\rm R})} \rangle %& = \lim_{(t',x') \to (t, x)}\langle \Omega_\ell^{({\rm D})} | \hat \Phi^{\rm B}(t,z) \hat \Phi^{\rm B}(t',z') \Omega_\ell^{({\rm D})} \rangle - H_{\rm M} ( (t,z), (t',z')) \nonumber \\
 =  \frac{1}{4 \pi} \ln \left( \frac{m^2 \ell^2}{4 \pi^2} \right) +   \frac{\gamma}{2 \pi} - \frac{1}{2 \pi}\Re \left( \ee^{-\frac{\ii \pi}{\ell} z} \ln \left(1 - \ee^{\frac{\ii 2 \pi}{\ell} z} \right) \right) \nonumber \\
& + \sum_{n = 1}^\infty \left\{ \left[ \frac{(\mathcal{N}^{\rm R}_n)^2}{2 \omega_n^{\rm R}} \left(\sin^2 \alpha \cos^2 \left( s_n^{\rm R} z\right) + \frac{\cos^2 \alpha}{(s_n^{\rm R})^2} \sin^2 \left( s_n^{\rm R} z\right) \right)  - \frac{1}{ \pi n} \cos^2 \left( \frac{\pi}{\ell}(n-1) z\right) \right]   \right. \nonumber \\
& \left. - \frac{(\mathcal{N}^{\rm R}_n)^2}{2 \omega_n^{\rm R} s_n^{\rm R}}\sin \alpha \cos \alpha      \sin\left(2 s_n^{\rm R} z \right)\right\},
\label{App:RobinVacuumBulk}
\end{align}
where the summand on the right-hand side of  \eqref{App:RobinVacuumBulk} is $O(n^{-2})$ and the sum converges absolutely and uniformly in $z \in (0, \ell)$.

\subsection{Renormalized local state polarization in the boundary}
\label{app:RobVb}

We are interested in analyzing the coincidence-limit behavior of
\begin{align}
 \langle \Omega_\ell^{({\rm R})} | \hat \Phi^\partial(t) \hat \Phi^\partial(t') \Omega_\ell^{({\rm R})} \rangle & = \sum_{n = 1}^\infty \frac{(\mathcal{N}_n^{\rm R})^2}{2\omega_n^{\rm R}} \ee^{-\ii \omega_n^{\rm R}(t-t')} \left[\beta_1' \left(\sin \alpha  \cos (\ell s_n^{\rm R})-\frac{\cos \alpha  \sin (\ell s_n^{\rm R})}{s_n^{\rm R}}\right) \right. \nonumber \\
 & \left. +\beta_2' (\cos \alpha \cos (\ell s_n^{\rm R})+s_n^{\rm R} \sin \alpha  \sin (\ell s_n^{\rm R})) \right]^2.
 \label{RobVb0}
\end{align}

From our estimates \eqref{snR3}, we observe that
\begin{align}
& \frac{(\mathcal{N}_n^{\rm R})^2}{2\omega_n^{\rm R}} \left[\beta_1' \left(\sin \alpha  \cos (\ell s_n^{\rm R})-\frac{\cos \alpha  \sin (\ell s_n^{\rm R})}{s_n^{\rm R}}\right) +\beta_2' (\cos \alpha \cos (\ell s_n^{\rm R})+s_n^{\rm R} \sin \alpha  \sin (\ell s_n^{\rm R})) \right]^2 \nonumber \\
 & = \frac{\ell^4 \sin^2(\alpha ) (\beta_1' \beta_2-\beta_1 \beta_2')^2}{ \pi ^5 (\beta_2')^2 n^5}+O\left(n^{-6}\right),
 \label{RobVb1}
\end{align}
so that the expression on the left-hand side of  \eqref{RobVb1} provides a summable, $t, t'$-independent bound for the summand on the right-hand side of  \eqref{RobVb0}, which allows us to use the dominated convergence theorem to write
\begin{align}
& \langle \Omega_\ell^{({\rm R})} | ( \hat  \Phi^{\partial} )^2(t) \Omega_\ell^{({\rm R})} \rangle = \lim_{t' \to t} \langle \Omega_\ell^{({\rm R})} | \hat \Phi^\partial(t) \hat \Phi^\partial(t') \Omega_\ell^{({\rm R})} \rangle \nonumber \\
& = \sum_{n = 1}^\infty \frac{(\mathcal{N}_n^{\rm R})^2}{2\omega_n^{\rm R}} \left[\beta_1' \left(\sin \alpha  \cos (\ell s_n^{\rm R})-\frac{\cos \alpha  \sin (\ell s_n^{\rm R})}{s_n^{\rm R}}\right)  +\beta_2' (\cos \alpha \cos (\ell s_n^{\rm R})+s_n^{\rm R} \sin \alpha  \sin (\ell s_n^{\rm R})) \right]^2,
\label{App:RobinVacuumBoundary}
\end{align}
where the summand on the right-hand side of  \eqref{App:RobinVacuumBulk} is $O(n^{-5})$ and the sum converges absolutely.

\subsection{Local Casimir energy in the bulk}
\label{app:RobHB}

We write the local Casimir energy as
\begin{align}
\langle  \Omega_\ell^{({\rm R})} |  \hat H^{\rm B}(t,z) \Omega_\ell^{({\rm R})} \rangle & = \frac{1}{2} \lim_{(t',z') \to (t,z) }\left[ \left( \partial_t \partial_{t'} + \partial_z \partial_{z'} \right) \langle  \Omega_\ell^{({\rm R})} | \hat \Phi^{\rm B}(t,z) \hat \Phi^{\rm B}(t',z') \Omega_\ell^{({\rm R})} \rangle \right. \nonumber \\ 
& \left. - \left( \partial_t \partial_{t'} + \partial_z \partial_{z'} \right) H_{\rm M} 	((t,z),(t',z')) \right] + \frac{m^2}{2} \langle \Omega_\ell^{({\rm R})} | (\hat{\Phi}^{{\rm B} })^2(t,z) \Omega_\ell^{({\rm R})} \rangle.
\end{align}

We are interested in extracting the coincidence-limit singular behavior of
\begin{subequations}
\begin{align}
& \frac{1}{2} \left( \partial_t \partial_{t'} + \partial_z \partial_{z'} \right) \langle  \Omega_\ell^{({\rm R})} | \hat \Phi^{\rm B}(t,z) \hat \Phi^{\rm B}(t',z') \Omega_\ell^{({\rm R})} \rangle = \sum_{n = 1}^\infty \left[ R_{{\rm B}\, n}^{{\rm H}(1)}(t,t',z,z') \right. \nonumber \\
& \hspace{50pt} \left. +  R_{{\rm B}\, n}^{{\rm H}(2)}(t,t',z,z') +  R_{{\rm B}\, n}^{{\rm H}(3)}(t,t',z,z') \right], \label{SingRobHBulk} \\
& R_{{\rm B}\, n}^{{\rm H}(1)}(t,t',z,z') := \frac{(\mathcal{N}^{\rm R}_n)^2}{16 \omega_n^{\rm R}} \left((\omega_n^{\rm R})^2 + (s_n^{\rm R})^2 \right)     \left(\sin^2 \alpha + (s_n^{\rm R})^{-2} \cos^2 \alpha \right) \nonumber \\
& \hspace{50pt} \times \ee^{-\ii \omega_n^{\rm R} (t-t')} \left(\ee^{\ii s_n^{\rm R}(z-z')} + \ee^{-\ii s_n^{\rm R}(z-z')}\right), \label{RBH1} \\
& R_{{\rm B}\, n}^{{\rm H}(2)}(t,t',z,z') := \frac{(\mathcal{N}^{\rm R}_n)^2}{16 \omega_n^{\rm R}} \left( (\omega_n^{\rm R})^2 - (s_n^{\rm R})^2 \right) \left(\sin^2 \alpha - (s_n^{\rm R})^{-2} \cos^2 \alpha \right) \nonumber \\
& \hspace{50pt} \times \ee^{-\ii \omega_n^{\rm R} (t-t')} \left(\ee^{\ii s_n^{\rm R}(z+z')} + \ee^{-\ii s_n^{\rm R}(z+z')}\right) = \frac{m^2}{2} R_{{\rm B}\, n}^{{\rm V}(2)}(t,t',z,z') , \label{RBH2}\\
& R_{{\rm B}\, n}^{{\rm H}(3)}(t,t',z,z') := - \frac{(\mathcal{N}^{\rm R}_n)^2}{8 \omega_n^{\rm R}} \left((\omega_n^{\rm R})^2 - (s_n^{\rm R})^2 \right)  \frac{\sin \alpha \cos \alpha}{ \ii s_n^{\rm R}}  \ee^{-\ii \omega_n^{\rm R} (t-t')} \left( \ee^{\ii s_n^{\rm R}(z+z')} - \ee^{-\ii s_n^{\rm R}(z+z')} \right) \nonumber \\
& \hspace{50pt} = \frac{m^2}{2} R_{{\rm B}\, n}^{{\rm V}(3)}(t,t',z,z').  \label{RBH3}
\end{align}
\end{subequations}

We start by handling the first term. We note that by estimate \eqref{LemmaBound} 

\begin{align}
& \left| \frac{\left((\omega_n^{\rm R})^2 + (s_n^{\rm R})^2 \right)}{\omega_n^{\rm R}} \left\{\ee^{-\ii \omega_n^{\rm R} (t-t')}\ee^{\pm \ii s_n^{\rm R}(z-z')} -  \ee^{\ii \frac{(n-1) \pi}{\ell} [-(t-t')\pm (z-z')]} \right. \right. \nonumber \\
& \left. \left. \times \sum_{k = 0}^2 \frac{1}{k!}\left[ -\ii \left( \omega_n^{\rm R} - \frac{(n-1) \pi}{\ell} \right) (t-t') \pm \ii \left( s_n^{\rm R} - \frac{(n-1) \pi}{\ell} \right)(z-z') \right]^k \right\}\right| \nonumber \\
& \leq \left| \frac{\left((\omega_n^{\rm R})^2 + (s_n^{\rm R})^2 \right)}{\omega_n^{\rm R}} \left\{  \ee^{-\ii \left( \omega_n^{\rm R} - \frac{(n-1) \pi}{\ell} \right) (t-t') \pm \ii \left( s_n^{\rm R} - \frac{(n-1) \pi}{\ell} \right)(z-z')} \right. \right. \nonumber \\
& - \left. \left. \sum_{k = 0}^2 \frac{1}{k!}\left[ -\ii \left( \omega_n^{\rm R} - \frac{(n-1) \pi}{\ell} \right) (t-t')   \pm \ii \left( s_n^{\rm R} - \frac{(n-1) \pi}{\ell} \right)(z-z') \right]^k \right\}\right| \nonumber \\
& \leq \frac{\left((\omega_n^{\rm R})^2 + (s_n^{\rm R})^2 \right)}{6 \omega_n^{\rm R}}  \left|  -\ii \left( \omega_n^{\rm R} - \frac{(n-1) \pi}{\ell} \right) (t-t')   \pm \ii \left( s_n^{\rm R} - \frac{(n-1) \pi}{\ell} \right)(z-z') \right|^3 \nonumber \\
& \leq \frac{\left((\omega_n^{\rm R})^2 + (s_n^{\rm R})^2 \right)}{6 \omega_n^{\rm R}} \left[  \left| \left( \omega_n^{\rm R} - \frac{(n-1) \pi}{\ell} \right) (t-t') \right|  + \left| \left( s_n^{\rm R} - \frac{(n-1) \pi}{\ell} \right)(z-z') \right| \right]^3.
\label{CasEnBulkBound}
\end{align}

From the above bounds, and using the expansion \eqref{snR3}, a dominated convergence argument (whereby one can find a summable summand -- behaving as $O(n^{-2})$ for large $n,$ and that is uniform for
 $t,t'$ in bounded set and in  $z,z'$ -- that bounds the right-hand side of  \eqref{CasEnBulkBound}) yields that
\begin{align}
& \lim_{(t',z') \to (t,z)} \sum_{n = 1}^\infty \left\{ R_{{\rm B}\, n}^{{\rm H}(1)}(t,t',z,z') - \frac{(\mathcal{N}^{\rm R}_n)^2}{16 \omega_n^{\rm R}} \left((\omega_n^{\rm R})^2 + (s_n^{\rm R})^2 \right)     \left(\sin^2 \alpha + \frac{\cos^2 \alpha}{(s_n^{\rm R})^2}  \right) \ee^{\ii \frac{(n-1) \pi}{\ell} [-(t-t')+ (z-z')]} \right. \nonumber \\
%&  \times   \sum_{m = 0}^2 \frac{1}{m!}\left[ -\ii \left( \omega_n^{\rm R} - \frac{(n-1) \pi}{\ell} \right) (t-t') + \ii \left( s_n^{\rm R} - \frac{(n-1) \pi}{\ell} \right)(z-z') \right]^m \nonumber \\
%& - \frac{(\mathcal{N}^{\rm R}_n)^2}{16 \omega_n^{\rm R}} \left((\omega_n^{\rm R})^2 + (s_n^{\rm R})^2 \right)     \left(\sin^2 \alpha + \frac{\cos^2 \alpha}{(s_n^{\rm R})^2}  \right) \ee^{\ii \frac{(n-1) \pi}{\ell} [-(t-t')- (z-z')]} \nonumber \\
& \left. \times \sum_{k = 0}^2 \frac{1}{k!}\left[ -\ii \left( \omega_n^{\rm R} - \frac{(n-1) \pi}{\ell} \right) (t-t') - \ii \left( s_n^{\rm R} - \frac{(n-1) \pi}{\ell} \right)(z-z') \right]^k \right\} = 0.
\end{align}

We are thus interested in the coincidence limits of sums of the form
\begin{align}
& \sum_{n = 1}^\infty \ee^{\ii \frac{(n-1) \pi}{\ell} [-(t-t')\pm (z-z')]} \frac{(\mathcal{N}^{\rm R}_n)^2}{16 \omega_n^{\rm R}} \left((\omega_n^{\rm R})^2 + (s_n^{\rm R})^2 \right)     \left(\sin^2 \alpha + \frac{\cos^2 \alpha}{(s_n^{\rm R})^2}  \right) \nonumber \\
&  \times   \sum_{k = 0}^2 \frac{1}{k!}\left[ -\ii \left( \omega_n^{\rm R} - \frac{(n-1) \pi}{\ell} \right) (t-t') \pm \ii \left( s_n^{\rm R} - \frac{(n-1) \pi}{\ell} \right)(z-z') \right]^k.
\end{align}

Observing from  \eqref{snR3} that
\begin{subequations}
\begin{align}
& \frac{(\mathcal{N}^{\rm R}_n)^2}{16 \omega_n^{\rm R}} \left((\omega_n^{\rm R})^2 + (s_n^{\rm R})^2 \right)     \left(\sin^2 \alpha + \frac{\cos^2 \alpha}{(s_n^{\rm R})^2}  \right)  \sum_{k = 0}^2 \frac{1}{k!}\left[ -\ii \left( \omega_n^{\rm R} - \frac{(n-1) \pi}{\ell} \right) (t-t') \right. \nonumber \\ 
& \left. \pm \ii \left( s_n^{\rm R} - \frac{(n-1) \pi}{\ell} \right)(z-z') \right]^k - R_{{\rm B} n}^{{\rm H} (1)\, \pm}(t,t',z,z') = O \left( n^{-2} \right),  \\
& R_{{\rm B} n}^{{\rm H} (1) \, \pm}(t,t',z,z') :=  \frac{\pi  n}{4 \ell^2}  +\frac{\ii \left(2 \ii \pi  \beta_2'- \ell \beta_2' m^2 (t-t')  -2 \left(\beta_1' +  \beta_2' \cot \alpha \right) (-(t-t')\pm (z-z')) \right)}{8 \beta_2' \ell^2} \nonumber \\
&  -\frac{\left(\beta_2' \ell m^2 (t-t')+2 \left( \beta_1 + \beta_2' \cot \alpha  \right) (-(t-t') \pm (z-z'))  \right)^2}{32  \pi  (\beta_2')^2 \ell^2 n}  ,
\end{align}
\end{subequations}
a dominated convergence argument allows us to take the coincidence limit inside the sum and
\begin{align}
& \lim_{(t',z') \to (t,z)} \sum_{n = 1}^\infty \left\{ R_{{\rm B}\, n}^{{\rm H}(1)}(t,t',z,z') - \ee^{\ii \frac{(n-1) \pi}{\ell} [-(t-t')+ (z-z')]} R_{{\rm B} n}^{{\rm H} (1) \, +}(t,t',z,z') - \ee^{\ii \frac{(n-1) \pi}{\ell} [-(t-t')- (z-z')]} R_{{\rm B} n}^{{\rm H} (1) \, -}(t,t',z,z') \right\} \nonumber \\
& = \sum_{n = 1}^\infty \left\{  \frac{(\mathcal{N}^{\rm R}_n)^2}{8 \omega_n^{\rm R}} \left((\omega_n^{\rm R})^2 + (s_n^{\rm R})^2 \right)     \left(\sin^2 \alpha + \frac{\cos^2 \alpha}{(s_n^{\rm R})^2}  \right) - \frac{\pi (n-1)}{2 \ell^2}  \right\}.
\label{RBH1-1}
\end{align}
The second and third sum on the left-hand side of  \eqref{RBH1-1} can be obtained in closed form and together have the singular structure of $\frac{1}{2}(\partial_t \partial_{t'} + \partial_z \partial_{z'}) H_{\rm M}((t,z),(t',z'))$, cf  \eqref{HM}, plus $O(1)$ terms in the coincidence limit, from where it follows that

\begin{align}
 \lim_{(t',z') \to (t,z)} & \left\{ \sum_{n = 1}^\infty R_{{\rm B}\, n}^{{\rm H}(1)}(t,t',z,z') - \frac{1}{2}(\partial_t \partial_{t'} + \partial_z \partial_{z'}) H_{\rm M}((t,z),(t',z')) \right\} \nonumber \\
& = \sum_{n = 1}^\infty \left\{  \frac{(\mathcal{N}^{\rm R}_n)^2}{8 \omega_n^{\rm R}} \left((\omega_n^{\rm R})^2 + (s_n^{\rm R})^2 \right)     \left(\sin^2 \alpha + \frac{\cos^2 \alpha}{(s_n^{\rm R})^2}  \right) - \frac{\pi (n-1)}{2 \ell^2}  \right\}  -\frac{\pi }{24 \ell^2}-\frac{\cot \alpha }{2 \pi  \ell}-\frac{\beta_1'}{2 \pi  \beta_2' \ell}+\frac{m^2}{8 \pi }.
\label{RBH1-lim}
\end{align}

Using  \eqref{RBV2-lim} and \eqref{RBV3lim} we obtain respectively
\begin{align}
\lim_{(t',z') \to (t,z)} \sum_{n = 1}^\infty  R_{{\rm B}\, n}^{{\rm H}(2)}(t,t',z,z')&  = \frac{m^2}{2}\sum_{n = 1}^\infty \left[ \frac{(\mathcal{N}^{\rm R}_n)^2}{4 \omega_n^{\rm R}} \left(\sin^2 \alpha - (s_n^{\rm R})^{-2} \cos^2 \alpha \right) \cos(2 s_n^{\rm R} z) \right.  \nonumber \\
&   \left.  - \frac{1}{ 2 \pi n} \cos \left( \frac{2(n-1) \pi}{\ell} z \right) \right] - \frac{m^2}{4 \pi} \Re \left[\ee^{-\ii \frac{ 2 \pi }{\ell} z} \ln \left(1-\ee^{\ii  \frac{2 \pi}{\ell} z}\right) \right].
\label{RBH2-lim}
\end{align}
and
\begin{align}
\lim_{(t',z') \to (t,z)} \sum_{n = 1}^\infty R_{{\rm B}\, n}^{{\rm H}(3)}(t,t',z,z') & = - \frac{m^2}{2} \sum_{n = 1}^\infty \frac{(\mathcal{N}^{\rm R}_n)^2}{2 \omega_n^{\rm R} s_n^{\rm R}}\sin \alpha \cos \alpha      \sin\left(2 s_n^{\rm R} z \right).
\label{RBH3-lim}
\end{align}

Collecting  \eqref{RBH1-lim}, \eqref{RBH2-lim} and \eqref{RBH3-lim}, together with the expression for the renormalized local state polarization, \eqref{App:RobinVacuumBulk} we finally have that

\begin{align}
& \langle  \Omega_\ell^{({\rm R})} |  \hat H^{\rm B}(t,z) \Omega_\ell^{({\rm R})} \rangle = -\frac{\pi }{24 \ell^2}-\frac{\cot \alpha }{2 \pi  \ell}-\frac{\beta_1'}{2 \pi  \beta_2' \ell}+\frac{m^2}{8 \pi }\left[ 1  + \ln \left( \frac{m^2 \ell^2}{4 \pi^2} \right) \right] + \frac{\gamma m^2 }{4 \pi} \nonumber \\
& - \frac{m^2}{2 \pi} \Re \left[\ee^{-\ii \frac{ 2 \pi }{\ell} z} \ln \left(1-\ee^{\ii  \frac{2 \pi}{\ell} z}\right) \right] - m^2 \sum_{n = 1}^\infty \frac{(\mathcal{N}^{\rm R}_n)^2}{2 \omega_n^{\rm R} s_n^{\rm R}}\sin \alpha \cos \alpha      \sin\left(2 s_n^{\rm R} z \right) \nonumber \\
&+\sum_{n = 1}^\infty \left\{  \frac{(\mathcal{N}^{\rm R}_n)^2}{8 \omega_n^{\rm R}} \left((\omega_n^{\rm R})^2 + (s_n^{\rm R})^2 \right)     \left(\sin^2 \alpha + \frac{\cos^2 \alpha}{(s_n^{\rm R})^2}  \right) - \frac{\pi (n-1)}{2 \ell^2}  \right\} \nonumber \\
& + \frac{m^2}{2}\sum_{n = 1}^\infty \left[ \frac{(\mathcal{N}^{\rm R}_n)^2}{4 \omega_n^{\rm R}} \left(\sin^2 \alpha - (s_n^{\rm R})^{-2} \cos^2 \alpha \right) \cos(2 s_n^{\rm R} z)   - \frac{1}{ 2 \pi n} \cos \left( \frac{2(n-1) \pi}{\ell} z \right) \right]  \nonumber \\
& + \frac{m^2}{2} \sum_{n = 1}^\infty  \left[ \frac{(\mathcal{N}^{\rm R}_n)^2}{2 \omega_n^{\rm R}} \left(\sin^2 \alpha \cos^2 \left( s_n^{\rm R} z\right) + \frac{\cos^2 \alpha}{(s_n^{\rm R})^2} \sin^2 \left( s_n^{\rm R} z\right) \right) - \frac{1}{ \pi n} \cos^2 \left( \frac{\pi}{\ell}(n-1) z\right) \right] .
\label{App:RobinHBulk}
\end{align}

The summand on the right-hand side of  \eqref{App:RobinHBulk} is $O(n^{-2})$ and the sum converges absolutely and uniformly in $z \in (0, \ell)$.

\subsection{Local Casimir energy in the boundary}
\label{app:RobHb}

We have defined the Casimir energy in the boundary as
\begin{align}
\langle  \Omega_\ell^{({\rm R})} |  \hat H^{\partial}(t) \Omega_\ell^{({\rm R})} \rangle = \frac{1}{2} \lim_{t' \to t } \left( \beta_1' \partial_t \partial_{t'} -\beta_1 \right) \langle  \Omega_\ell^{({\rm R})} | \hat \Phi^{\partial}(t) \hat \Phi^{\partial}(t') \Omega_\ell^{({\rm R})} \rangle,
\end{align}
so we wish to analyze the coincidence-limit behavior of
\begin{align}
 \frac{1}{2}\left(\beta_1' \partial_t \partial_{t'} -\beta_1 \right) & \langle \Omega_\ell^{({\rm R})} | \hat \Phi^\partial(t) \hat \Phi^\partial(t') \Omega_\ell^{({\rm R})} \rangle \nonumber \\
 & = \sum_{n = 1}^\infty \frac{\left[\beta_1' (\omega_n^{\rm R})^2 -\beta_1\right] (\mathcal{N}_n^{\rm R})^2}{4\omega_n^{\rm R}} \ee^{-\ii \omega_n^{\rm R}(t-t')} \left[\beta_1' \left(\sin \alpha  \cos (\ell s_n^{\rm R})-\frac{\cos \alpha  \sin (\ell s_n^{\rm R})}{s_n^{\rm R}}\right) \right. \nonumber \\
 & \left. +\beta_2' (\cos \alpha \cos (\ell s_n^{\rm R})+s_n^{\rm R} \sin \alpha  \sin (\ell s_n^{\rm R})) \right]^2.
 \label{RobHb0}
\end{align}

We observe that from the asymptotics \eqref{snR3}
\begin{align}
& \frac{\left[\beta_1' (\omega_n^{\rm R})^2 -\beta_1 \right](\mathcal{N}_n^{\rm R})^2}{4\omega_n^{\rm R}} \left[\beta_1' \left(\sin \alpha  \cos (\ell s_n^{\rm R})-\frac{\cos \alpha  \sin (\ell s_n^{\rm R})}{s_n^{\rm R}}\right) +\beta_2' (\cos \alpha \cos (\ell s_n^{\rm R})  +s_n^{\rm R} \sin \alpha  \sin (\ell s_n^{\rm R})) \right]^2  \nonumber \\
& = \frac{\beta_1' \ell^2 \sin^2(\alpha ) (\beta_1' \beta_2-\beta_1 \beta_2')^2}{2 \pi^3 (\beta_2')^2 n^3}+O\left(n^{-4}\right).
 \label{RobHb1}
\end{align}
Hence, \eqref{RobHb1} provides a summable, $t, t'$-independent bound for the summand on the right-hand side of  \eqref{RobHb0}, which allows us to use the dominated convergence theorem to take the $t' \to t$ limit inside the sum defining the boundary Casimir energy as follows
\begin{align}
& \langle \Omega_\ell^{({\rm R})} | \hat H^{\partial}(t) \Omega_\ell^{({\rm R})} \rangle = \lim_{t' \to t} \frac{1}{2}\left(\beta_1' \partial_t \partial_{t'} -\beta_1 \right)\langle \Omega_\ell^{({\rm R})} | \hat \Phi^\partial(t) \hat \Phi^\partial(t') \Omega_\ell^{({\rm R})} \rangle \nonumber \\
& = \sum_{n = 1}^\infty \frac{\left[ \beta_1'(\omega_n^{\rm R})^2 -\beta_1 \right] (\mathcal{N}_n^{\rm R})^2 }{4\omega_n^{\rm R}} \left[\beta_1' \left(\sin \alpha  \cos (\ell s_n^{\rm R})-\frac{\cos \alpha  \sin (\ell s_n^{\rm R})}{s_n^{\rm R}}\right)   +\beta_2' (\cos \alpha \cos (\ell s_n^{\rm R})+s_n^{\rm R} \sin \alpha  \sin (\ell s_n^{\rm R})) \right]^2.
\label{App:RobinHBoundary}
\end{align}

The summand on the right-hand side of  \eqref{App:RobinHBoundary} is $O(n^{-3})$ and the sum converges absolutely.

%=====================================================================================================
% BIBLIOGRAPHY
%=====================================================================================================

%\bibliographystyle{abbrv}
%%{alpha}
%\bibliography{BiblioRicardo}

\end{document}